\documentclass[%
11pt,
reprint,
onecolumn,
tightenlines,
superscriptaddress,
preprintnumbers,
nofootinbib,
amsmath,amssymb,amsthm,
physrev,
eqsecnum,tikz,
]{revtex4-2}

\usepackage{isomath}
\usepackage{amsbsy}
\usepackage{amssymb}
\usepackage{amscd}
\usepackage{amsfonts}
\usepackage{stmaryrd}
\usepackage{euscript}
\usepackage[utf8]{inputenc}
\usepackage[T1]{fontenc}
\usepackage{newtxtext} 
\everymath{\displaystyle}
\usepackage{exscale}
\usepackage{microtype}
\usepackage{booktabs}
\usepackage{algorithm}
\usepackage{algpseudocode}
\usepackage{hyperref}

\usepackage{amsmath,amsthm}
\usepackage{mathtools}          
\usepackage{multirow}
\usepackage{array}
\usepackage{cleveref}

\usepackage{graphicx}
\usepackage{boxedminipage}
\usepackage{calc}
\usepackage[dvipsnames]{xcolor}
\graphicspath{ {media/} }
\usepackage[caption=false,justification=raggedright]{subfig}

\usepackage{setspace}
\usepackage{enumitem}
\setitemize{noitemsep,topsep=0pt,parsep=0pt,partopsep=0pt}
\setenumerate{noitemsep,topsep=0pt,parsep=0pt,partopsep=0pt}
\setdescription{noitemsep,topsep=0pt,parsep=0pt,partopsep=0pt}

\usepackage[normalem]{ulem}

\usepackage{orcidlink}
\usepackage{siunitx}
\usepackage[small]{titlesec}

\titlespacing*{\section}{0pt}{12pt plus 4pt minus 2pt}{2pt plus 2pt minus 2pt}
\titlespacing*{\subsection}{0pt}{12pt plus 4pt minus 2pt}{2pt plus 2pt minus 2pt}
\titlespacing*{\subsubsection}{0pt}{12pt plus 4pt minus 2pt}{2pt plus 2pt minus 2pt}
\titlespacing*{\paragraph}{0pt}{12pt plus 4pt minus 2pt}{2pt plus 2pt minus 2pt}

\makeatletter
    \renewcommand*{\thesection}{\arabic{section}}
    \renewcommand*{\thesubsection}{\thesection.\Alph{subsection}}
    \renewcommand*{\p@subsection}{}
    \renewcommand*{\thesubsubsection}{\thesubsection.\arabic{subsubsection}}
    \renewcommand*{\p@subsubsection}{}
\makeatother

\usepackage{upgreek}

\newcommand{\bw}{\mathbold{w}}           
\newcommand{\bv}{\mathbold{v}}           
\newcommand{\bx}{\mathbold{x}}           
\newcommand{\bu}{\mathbold{u}}           
\newcommand{\bxi}{\mathbold{\xi}}        
\newcommand{\bp}{\mathbold{p}}           
\newcommand{\be}{\mathbold{e}}           
\newcommand{\bA}{\mathbold{A}}           
\newcommand{\bL}{\mathbold{L}}           
\newcommand{\bF}{\mathbold{F}}           
\newcommand{\bM}{\mathbold{M}}           
\newcommand{\bI}{\mathbold{I}}           
\newcommand{\bD}{\mathbold{D}}           
\newcommand{\bW}{\mathbold{W}}           
\newcommand{\bsigma}{\mathbold{\Sigma}}           
\newcommand{\bC}{\mathbold{C}}           
\newcommand{\bV}{\mathbold{V}}           
\newcommand{\gf}{g}                
\newcommand{\ff}{f}                
\newcommand{\coll}{\mathcal{Q}}    
\newcommand{\T}{\mathsf{T}}        

\newtheorem{proposition}{Proposition}[section]

\theoremstyle{definition}

\AtEndEnvironment{definition}{\null\hfill\qedsymbol}

\AtEndEnvironment{remark}{\null\hfill\qedsymbol}

\AtEndEnvironment{example}{\null\hfill\qedsymbol}

\AtEndEnvironment{assumption}{\null\hfill\qedsymbol}

\newcommand{\bfalpha}{\mathbold {\alpha}}

\newcommand{\bfzero}{\mathbf{0}}

\DeclareMathOperator{\trace}{tr}

\newcommand{\parderiv}[2]{\frac{\partial #1}{\partial #2}}
\newcommand{\dm}{\ \mathrm{d}}
\newcommand{\deriv}[2]{\frac{\dm #1}{\dm #2}}

\newcommand{\bft}{{\mathbold t}}

\newcommand{\bfv}{{\mathbold v}}
\newcommand{\bfw}{{\mathbold w}}
\newcommand{\bfx}{{\mathbold x}}

\newcommand{\bfA}{{\mathbold A}}

\newcommand{\bfC}{{\mathbold C}}
\newcommand{\bfD}{{\mathbold D}}

\newcommand{\bfF}{{\mathbold F}}

\newcommand{\bfI}{{\mathbold I}}

\newcommand{\bfL}{{\mathbold L}}

\newcommand{\bfV}{{\mathbold V}}
\newcommand{\bfW}{{\mathbold W}}

\begin{document}


\preprint{To appear in Physics of Fluids (DOI: \href{https://doi.org/10.1063/5.0346853}{10.1063/5.0346853})}

\title{Anisotropic Thermalization in Far-from-Equilibrium Flows}

\author{Arnab Debnath \orcidlink{0009-0008-0829-3385}} 
    \affiliation{Program in Computational Mechanics, Carnegie Mellon University}
    \affiliation{Department of Civil and Environmental Engineering, Carnegie Mellon University}

\author{Timothy Breitzman}
    \affiliation{Materials and Manufacturing Directorate, Air Force Research Laboratory}

\author{Kaushik Dayal \orcidlink{0000-0002-0516-3066}}
    \email{Kaushik.Dayal@cmu.edu}
    \affiliation{Department of Civil and Environmental Engineering, Carnegie Mellon University}
    \affiliation{Center for Nonlinear Analysis, Department of Mathematical Sciences, Carnegie Mellon University}
    \affiliation{Department of Mechanical Engineering, Carnegie Mellon University}

\date{\today}


\begin{abstract}
We present a deterministic discontinuous Galerkin (DG) finite-element solution of the Boltzmann equation, without moment-closure approximations, under a class of far-from-equilibrium deformations.
Specifically, we consider affine flows which reduce the Boltzmann equation to a purely velocity-space problem for the reduced distribution function in a reduced velocity field.
We solve the reduced equation using a tensor-product Lagrange DG discretization for four representative flows: simple shear, pressure shear,
bi-directional shear, and a vortex flow.
Our principal finding is that the velocity distribution is well-approximated by an anisotropic Gaussian throughout the evolution, despite the non-equilibrium conditions.  
Further, we show the evolution of the covariance tensor of the Gaussian distribution is equal to the inverse of the right Cauchy-Green tensor in the free-streaming limit without collisions.
This prediction compares very well with the numerical solution at short times; at longer times, they grow apart, reflecting the influence of particle collisions.
\end{abstract}

\maketitle


\section{Introduction}
\label{sec:intro}

The Boltzmann equation provides a description of matter in the regime that is between the completely discrete atomistic and the classical continuum models \cite{truesdell1980}. It provides the fundamental kinetic-theory basis for non-equilibrium
thermodynamics, rarefied gas flow \cite{truesdell1980}, and the derivation of macroscopic
transport equations from first principles
\cite{cercignani1988,chapman1970,bird1994}.
It has been applied to model dilute systems of large numbers of particles that interact in a pairwise manner, for instance phonons carrying thermal energy that lead to heat transfer \cite{ziman1960electrons}; electrons in semiconductors \cite{Deng2023SiCBTE}; plasmas \cite{Petrov2024ElectronBoltzmannHypersonicPlasmas}; and particulate  materials \cite{Jenkins1983RapidFlowGranular,Syamlal1993MFIXTheoryGuide,Carrillo2021KineticGranularMaterials}.

The Boltzmann equation is posed in terms of a molecular distribution function $f(\bfx,\bfv,t)$, that provides the number of particles at spatial location $\bfx$ with velocity $\bfv$, at time $t$. 
Direct numerical solution is challenging: the distribution function depends on three spatial coordinates, three velocity coordinates, and time, making standard numerical approaches prohibitively expensive.

\paragraph*{Affine flows and the homoenergetic reduction.}

An important class of exact solutions of the Boltzmann equation corresponds to spatially affine (homogeneous) velocity fields.
The idea that molecular dynamics simulations can be performed for flows of this type, using only a small symmetry-reduced unit cell, originates in the concept of objective molecular dynamics (OMD) introduced by James and collaborators \cite{james2006,dumica2007}.
The time-dependent non-equilibrium generalization, appropriate for transient shear and viscometric flows, was developed by Dayal and James for molecular dynamics \cite{dayal2010} and shown to hold also for the Boltzmann equation \cite{dayal2012,pahlani2023a,pahlani2023b,pahlani2023c}.
Specifically, there exists a large class of flows that reduce the molecular distribution function, denoted the \emph{homoenergetic reduction}, that have form $f(\bfx,\bfv,t)=g(\bfw,t)$, where $\bfw=\bfv - \bfA (\bfI + \bfA t)^{-1} \bfx$ is a change of variables with $\bfA$ an arbitrary constant tensor.
This reduction is motivated by fundamental invariances related to frame-indifference.

These solutions are generically nonequilibrium flows that evolve in time, and represent a vast variety of physical flows for different choices of $\bfA$.
These solutions are related to, and can be considered a generalization of, the Lees-Edwards boundary conditions that are used to simulate Couette flows with molecular dynamics \cite{lees1972,Clausen2011CapsuleSuspensions,Brilliantov2004KineticTheoryGranularGases,Gallier2014RoughFrictionalSuspensions,rosenbaum2019effects,rosenbaum2019surfactant}.
The mathematical structure of these \emph{homoenergetic} solutions was
analyzed in depth by James, Nota, and Vel\'azquez
\cite{james2019b,james2020}, who established existence,
self-similar long-time asymptotics, and the precise roles of
collision-dominated and hyperbolic-dominated regimes.
The kinetic-theory description connecting OMD to the Boltzmann equation
was formalized by James, Qi, and Wang \cite{james2024}.

\paragraph*{Numerical methods for the Boltzmann equation.}
Numerical methods for the solution of the Boltzmann equation follow two principal strategies for the velocity-space discretization: (i) spectral methods based on the Fourier transform of the collision operator \cite{pareschi2000,mouhot2006fast,gamba2017}, and (ii) discontinuous Galerkin (DG) finite-element methods in velocity space, pioneered by Alekseenko and Josyula \cite{alekseenko2014} and extended by Jaiswal, Alexeenko, and Hu \cite{jaiswal2019a,jaiswal2019b} and by Zhang and Gamba \cite{zhang2018}.
Atomistic OMD simulations \cite{pahlani2023a,pahlani2023b} provide molecular dynamics trajectories but not a direct solution of the kinetic equation.

\paragraph*{Observation of anisotropic thermalization.}

The relaxation of anisotropic velocity distributions toward
equilibrium is a fundamental problem in non-equilibrium statistical
mechanics. Cross-dimensional relaxation measurements have been
used to infer collision rates: in early precision measurements of
ultracold Cs--Cs elastic scattering, Monroe et al.\ inferred the
scattering rate from the decay of anisotropy in an initially
non-equilibrium energy distribution \cite{monroe1993}. When the
scattering cross-section itself is anisotropic, the thermalization
rate can become direction-dependent, a phenomenon often referred to
as anisotropic thermalization: in ultracold dipolar gases,
the re-thermalization rate can vary by as much as a factor of two with
the orientation of the dipoles relative to the excitation geometry
\cite{bohnjin2014,aikawa2014,wang2021anisotropic}.

For the dilute classical gas, the relaxation behaviour depends
sensitively on the collision kernel. The Bobylev--Krook--Wu solution
\cite{bobylev1984exact} gives a closed form for spatially homogeneous
relaxation of Maxwell molecules and is widely used to verify
deterministic Boltzmann solvers. For homoenergetic flows of the same
affine form considered here, James, Nota and Vel\'azquez
\cite{james2019b} proved that, for Maxwell molecules and a large
class of flows, the long-time dynamics can be governed
by non-Maxwellian self-similar profiles describing far-from-equilibrium
states. For uniform shear flow, analytical and numerical studies of
Maxwell molecules likewise show non-Maxwellian self-similar profiles
and provide evidence for highly anisotropic, algebraically decaying
high-velocity tails \cite{acedo2002tail,duan2021}; a comprehensive
account of the nonlinear-transport properties of sheared gases is
given in the monograph of Garz\'o and Santos \cite{garzo2003}.

A natural way to encode the resulting anisotropy is through a
tensorial temperature, defined from the full kinetic-energy
tensor rather than from its trace alone. In non-equilibrium molecular
dynamics, related ideas appear in modified thermostatting strategies
for imposed flows, in which different Cartesian velocity components
may be thermostatted separately to remove viscous heat without
directly damping the imposed streaming motion; related thermostatting
issues arise in sheared-liquid MD near solid boundaries
\cite{thompson1990} and in later studies of nanoscale slip and
interfacial friction in confined geometries \cite{falk2010}. The same
kinetic-tensor structure underlies the ellipsoidal-statistical BGK
model \cite{holway1966,andries2000}, in which the isotropic
Maxwellian relaxation target is replaced by an ellipsoidal Gaussian
whose temperature tensor is tied to the stress tensor, and the
anisotropic moment closures developed for relativistic dissipative
fluid dynamics \cite{molnar2016anisotropic}. Sega and Jedlovszky
\cite{sega2018tensorial} have emphasized that imposing a tensorial
temperature through a thermostat can yield ensembles that differ from
the canonical one for molecular fluids near interfaces, so the concept
requires care when used as a modeling input.

The detailed analytical results just summarised concern primarily
Maxwell molecules, and the tensorial-temperature constructions above
arise as modeling or thermostatting choices. The present work is
complementary in two respects: it addresses hard spheres under the
three-dimensional affine flows considered here via deterministic
discontinuous Galerkin solution of the Boltzmann equation, and the
fitted covariance \(\bsigma(t)\) plays the role of a tensorial
temperature measured from the solution rather than imposed. Its
principal axes and eigenvalue spectrum, analyzed in the remainder of
this section, provide an intrinsic characterization of the anisotropic
thermalization of the gas under sustained affine deformation.

\paragraph*{Contributions of this paper.}
In this work, we examine the time evolution of the invariance-reduced solutions using a Discontinuous Galerkin Finite Element Method (DG-FEM) formulation that has been shown to be well-suited to solving the Boltzmann equation in other settings \cite{alekseenko2014,zhang2018,jaiswal2019a,jaiswal2019b,Barth2006Boltzmann}.
The DG-FEM approach reduces the computation of the Boltzmann collision operator from $O(n^8)$ for a brute-force evaluation to $O(n^5)$, where $n$ is the number of quadrature points per direction in velocity space \cite{alekseenko2014}.
Crucially, in combination with the reduction enabled by the homoenergetic invariance, DG-FEM provides a method to numerically compute solutions of the full Boltzmann collision integral without resorting to moment closure approximations such as BGK \cite{bhatnagar1954} or Grad's thirteen-moment system \cite{levermore1996moment}.
The conservation properties of the collision operator are enforced via a constrained least-squares correction following Zhang and Gamba \cite{zhang2018}, which preserves mass exactly and allows momentum and energy to evolve freely under the applied deformation.
We solve using a tensor-product Lagrange DG discretisation in velocity space on $3^3$ and $5^3$ element meshes ($27$ and $125$ elements), with a fifth-order Adams--Bashforth time integrator.

We apply this formulation to four representative homoenergetic flows: simple shear (Couette flow), pressure shear, bi-directional shear, and a vortex-like flow.
We find that the velocity distribution remains well-approximated by an anisotropic Gaussian throughout the evolution, starting from an isotropic equilibrium Maxwellian and reaching anisotropy ratios $\lambda_1/\lambda_3$ of up to $9.96$. 
We further derive that in the free-streaming regime, i.e., when collisions are neglected, the covariance tensor satisfies
\begin{equation}
  \bsigma(t) = (\bF^\T\bF)^{-1},
  \qquad \bF(t) = \bI + t\bA,
  \label{eq:Cpred_intro}
\end{equation}
where $\bfF$ is the deformation gradient tensor and $\bfC := \bfF^\T \bfF$ is the right Cauchy-Green tensor.
The kinematic prediction~\eqref{eq:Cpred_intro} is compared against the numerical results in various relevant velocity-space planes via a per-plane and we find close agreement for short times while the difference grows over time and is related to the continuum spin tensor.

The paper is organized as follows.
\Cref{sec:formulation} presents the homoenergetic reduction and the
four test flows.
\Cref{sec:numerics} describes the DG discretization, collision operator,
conservation routine, and time integration.
\Cref{sec:kinematic} derives the free-streaming kinematic prediction.
\Cref{sec:results} presents the numerical results.
\Cref{sec:discussion} interprets the collision-retardation and summarizes conclusions.

\subsection{Notation}
\label{sec:notation}

\begin{tabbing}
  \hspace{3.5cm} \= \kill
  $(\cdot)_{ij}$          \> $(i,j)$ component of a tensor \\
  $\operatorname{tr}(\cdot)$ \> Trace of a tensor \\
  $|\cdot|$               \> Euclidean norm of a vector \\
  $\det(\cdot)$           \> Determinant of a tensor \\
  $(\circ)$               \> Hadamard product \\
  Repeated indices        \> Einstein summation convention unless stated otherwise \\
  Superscript $(\cdot)'$  \> Post-collision quantity \\
  Superscript $(\cdot)_*$ \> Quantity associated with the second particle \\
\end{tabbing}

\section{Invariance-Reduced Formulation of the Boltzmann Equation}
\label{sec:formulation}

The Boltzmann equation is a well-established model, e.g. \cite{truesdell1980,cercignani1975,Santos2011BGKGranularRoughSpheres,ashcroft1976solid,landau1981physical,cercignani1988}; we summarize it below to fix the notation and assumptions.
The central quantity is the molecular distribution function $f(\bfx,\bfv,t)$ with position $\bfx$, velocity $\bfv$, and time $t$.
It describes the probability $f\dm\bft \dm\bfx \dm\bfv$ of finding a particle in the spatial interval $[\bfx,\bfx+\dm\bfx]$, in the time interval $[t,t+\dm t]$, with velocity in the interval $[\bfv,\bfv+\dm\bfv]$.
The spatially inhomogeneous Boltzmann equation in the absence of any external forces, for a monatomic system
reads
\begin{equation}
  \frac{\partial \ff}{\partial t} + \bv \cdot \nabla_{\bx} \ff
  = \coll[f,f]
  \label{eq:bte_full}
\end{equation}
The collision operator $\coll$ on the right has the form:
\begin{equation}
    \coll[f,f]=\int_{\mathbb{R}^3} \int_{\mathcal{S}} \left( f'_{*} f' - f_* f\right) \mathbb{S} \dm\mathcal{S} \dm\bfv_* = \int_{\mathbb{R}^3} \int_{\mathcal{S}} \left( f'_{*} f' - f_* f\right) |\bv-\bv_*| \dm\mathcal{S} \dm\bfv_*
    \label{eq:collision}
\end{equation}
The pre-collision velocities are denoted as $\bfv$ and $\bfv_*$, and the respective post-collision velocities are denoted using primes: $\bfv'$ and $\bfv'_*$.
For brevity, we write $f(\bfx,\bfv_*,t) \equiv f_*$ and similarly for the primed quantities.
The scattering factor $\mathbb{S}$ represents the physics of the collisional interaction and depends on the nature of the particle interactions.
The scattering domain is represented by $\mathcal{S}$.
An important assumption in this form of the collision operator is that particles only interact in a pairwise manner, i.e., the collisions or interactions always involve only two particles.

Evaluating the high-dimensional collision integral can be computationally prohibitive.
In this work, we focus on hard spheres with elastic interactions because, first, it is relevant to a number of applications in mechanics such as granular flows and rarefied gases; and, second, closed-form expressions for the collision behavior are readily available, e.g. \cite{truesdell1980}.

\subsection{Invariance Reduction of the Boltzmann Equation.}
The invariance reduction used in this paper was motivated by fundamental invariances inherent to molecular dynamics \cite{james2006,dumica2007} and that are inherited by the Boltzmann equation \cite{dayal2012,dayal2010,nota-1,james2020,bobylev2020,haines2009three}.
Consider a periodic molecular dynamics calculation where the positions of the simulated atoms in the zero-th unit cell are $\bx_{0,k}(t)$ where $k$ denotes the atom number in the zero-th unit cell. 
The periodicity is described by the lattice vectors $\mathbf{f}_1(t), \mathbf{f}_2(t), \mathbf{f}_3(t)$; we emphasize that these vectors are allowed to depend on time.
The positions of the image atoms are then given by the expression:
\begin{equation}
  \bx_{\nu,k}(t)
  = \bx_{0,k}(t)
    + \nu^1 \mathbf{f}_1(t)
    + \nu^2 \mathbf{f}_2(t)
    + \nu^3 \mathbf{f}_3(t),
  \quad \nu^1,\nu^2,\nu^3\in\mathbb{Z}.
  \label{eq:image-atoms}
\end{equation}
Requiring that the image trajectories satisfy Newton's equations with no spurious force constrains the lattice vectors to depend at most affinely on time \cite{dayal2010,dayal2012}: $\mathbf{f}_i(t) = (\bI + t\bA)\mathbf{f}_i(0)$, where $\bfA$ is a constant tensor.
Different choices of $\bA$ realize a wide family of viscometric flows \cite{dayal2012}.
For example,
\begin{equation}
  \bA = \begin{pmatrix}0 & \dot\gamma & 0 \\ 0 & 0 & 0 \\ 0 & 0 & 0
        \end{pmatrix}
\end{equation}
gives Couette flow with shear rate $\dot\gamma$, and is equivalent to the Lees--Edwards boundary condition used in non-equilibrium molecular dynamics \cite{lees1972,dayal2010,dayal2012}. 

Moving from the discrete molecular description to the continuum Boltzmann description where each point in space is assumed to contain a large number of molecules, and also converting from the Lagrangian description to the Eulerian, the ansatz for the continuum deformation in terms of the continuum velocity field $\bfV(\bfx,t)$ is:
\begin{equation}
	\bfV(\bfx,t) = \bfA (\bfI + \bfA t)^{-1} \bfx
    \label{eq:L_DayalJames}
\end{equation}
The corresponding spatial velocity gradient is $\bL(t) = \bA(\bI + t\bA)^{-1}$, the deformation gradient is $\bF(t) = \bI + t\bA$, and $\dot{\bF}=\bfL\bfF=\bfA$.
We only consider $t$ such that $(\bI+t\bA)$ is non-singular.

Translating this to the Boltzmann equation, for this class of flows, the velocity statistics of the molecular distribution at any spatial point $\bx$ can be related to the statistics at the origin: 
the probability $\ff(\bfx,\bv,t)\,\mathrm{d}\bx\,\mathrm{d}\bv$ of finding a particle in $[\bx,\bx+\mathrm{d}\bx]$ with velocity in $[\bv,\bv+\mathrm{d}\bv]$ is the same as the probability $\ff(\bfzero,\bv-\bL(t)\bx,t)\,\mathrm{d}\bx\,\mathrm{d}\bv$ of finding a particle at the spatial region $[{\bf 0},{\bf 0}+\dm\bfx]$ with velocity in the interval $[\bfv- \bL\bfx,\bfv+\dm\bfv- \bL\bfx]$.
Physically, for this class of flows, the velocity statistics of the molecular distribution at any point $\bfx$ can be related to the velocity statistics at the origin ${\bf 0}$; if we compute the statistics of the molecular distribution at the origin, we can find the statistics at any point in space. 

This motivates the homoenergetic ansatz
\begin{equation}
  \ff(\bx,\bv,t) = \gf(\bw,t),
  \qquad \bw := \bv - \bL(t)\bx,
  \label{eq:ansatz}
\end{equation}
where $\bw$ is the \emph{reduced velocity}.
While motivated by molecular considerations, the reduction can be verified by direct substitution \cite{dayal2010,dayal2012}.
Substituting~\eqref{eq:ansatz} into~\eqref{eq:bte_full}, we get:
\begin{equation}
    \frac{\partial \ff}{\partial t} + \bv\cdot\nabla_{\bx} \ff
    = 
    \frac{\partial \gf}{\partial t}
    - (\dot{\bL}\bx)\cdot\nabla_{\bw} \gf
    - (\bL\bv)\cdot\nabla_{\bw} \gf
    =
    \frac{\partial \gf}{\partial t}
    - \bigl[(\dot{\bL} + \bL^2)\bx + \bL\bw\bigr]\cdot\nabla_{\bw} \gf
    = 
    \coll[\gf,\gf].
    \label{eq:bte_substituted}
\end{equation}
where the reduced form of the Collision operator $\coll$ is:
\begin{equation}
    \coll[\gf,\gf] = \int_{\mathbb{R}^3} \int_{\mathcal{S}} \left( g'_{*} g' - g_* g\right) |\bfw-\bfw_*| \dm\mathcal{S} \dm\bfw_*
    \label{eq:red_coll}
\end{equation}

We next notice that for flows $\bL(t)=\bA(\bI+t\bA)^{-1}$, we have $\dot{\bL}(t) = -\bL(t)^2$. By differentiating $\bL=\bA(\bI+t\bA)^{-1}$ with respect to $t$, we get:
\begin{equation}
  \dot{\bL}
  = \bA\,\frac{\mathrm{d}}{\mathrm{d}t}\bigl[(\bI+t\bA)^{-1}\bigr]
  = \bA\bigl[-(\bI+t\bA)^{-1}\bA(\bI+t\bA)^{-1}\bigr]
  = -[\bA(\bI+t\bA)^{-1}][\bA(\bI+t\bA)^{-1}]
  = -\bL^2,
\end{equation}

Therefore, \eqref{eq:bte_substituted} reduces to the spatially homogeneous form:
\begin{equation}
    \frac{\partial \gf}{\partial t}
    =
    \bL\bw\cdot\nabla_{\bw} \gf
    +
    \coll[\gf,\gf].
  \label{eq:reduced_bte}
\end{equation}

\subsection{Evolution of Macroscopic Quantities}
\label{sec:energy_discussion}

The number density, mean reduced velocity, and reduced second moment
are
\begin{align}
  n(t)        &= \int_{\mathbb{R}^3} g(\bw,t)\,\mathrm{d}\bw, \\
  \bV_0(t)    &= \frac{1}{n(t)}\int_{\mathbb{R}^3} \bw\,g(\bw,t)\,\mathrm{d}\bw, \\
  \bsigma(t)  &= \frac{1}{n(t)}\int_{\mathbb{R}^3} (\bw-\bV_0)\otimes(\bw-\bV_0)\,
                   g(\bw,t)\,\mathrm{d}\bw.
\end{align}

The specific internal energy is then $e(t) = \tfrac{1}{2}\operatorname{tr}\bsigma(t)$.
Further, from \eqref{eq:ansatz}, the continuum velocity $\bfV(\bfx,t)$ is related to $\bfV(\bfzero,t)$ by \cite{dayal2010}:
\begin{equation}
  \bV(\bx,t)
  = \frac{\int_{\mathbb{R}^3}\bv\,\ff(\bx,\bv,t)\,\mathrm{d}\bv}
         {\int_{\mathbb{R}^3}\ff(\bx,\bv,t)\,\mathrm{d}\bv}
  =  \frac{\int_{\mathbb{R}^3} \bfw g(\bfw,t) \dm\bfw}{\int_{\mathbb{R}^3} g(\bfw,t) \dm\bfw} + \bfA(\bfI+\bfA t)^{-1}\bfx
  = \bV(\bfzero,t) + \bL(t)\bx
  = \bV_0(t) + \bL(t)\bx
  \label{eq:mean_velocity}
\end{equation}

\paragraph{Number-density evolution.}
Integrating~\eqref{eq:reduced_bte} over $\bw$ and using
integration by parts on the transport term yields
\begin{equation}
  \frac{\mathrm{d}n}{\mathrm{d}t} = -\operatorname{tr}\bL(t)\,n(t).
  \label{eq:dndt}
\end{equation}
where we use mass conservation by the collision operator: $\int_{\mathbb{R}^3}\coll[g,g]\,\mathrm{d}\bw=0$. Equation~\eqref{eq:dndt} is the statement of local mass
conservation under the homoenergetic reduction: with $n$ spatially
uniform, continuity reduces to $\dot n=-\operatorname{tr}\bL\,n$, which
integrates to $n(t)=n_0/\det\bF(t)$.

\paragraph{Energy-moment evolution.}
Multiplying~\eqref{eq:reduced_bte} by $\tfrac{1}{2}|\bw|^2$ and integrating, using
integration by parts on the transport term and energy conservation by
elastic collisions: $\int_{\mathbb{R}^3}|\bw|^2\coll[g,g]\,\mathrm{d}\bw = 0$, gives
\begin{equation}
  \frac{\mathrm{d}}{\mathrm{d}t}\!\int_{\mathbb{R}^3}\!\tfrac{1}{2}|\bw|^2 g\,\mathrm{d}\bw
  = -n\,\bD(t)\!:\!\bsigma(t)
    - \operatorname{tr}\bL(t)\!\int_{\mathbb{R}^3}\!\tfrac{1}{2}|\bw|^2 g\,\mathrm{d}\bw,
  \label{eq:dEdt}
\end{equation}
where $\bD(t) = \tfrac{1}{2}(\bL+\bL^\T)$ is the strain-rate tensor.

\paragraph{Specific energy.}
Writing the kinetic energy per unit volume as $E(t) = \int_{\mathbb{R}^3}\tfrac{1}{2}|\bw|^2 g\,\mathrm{d}\bw = n(t) e(t)$, which defines
the specific internal energy $e(t)=\tfrac12\operatorname{tr}\bsigma(t)$
per particle, and differentiating, then combining with \eqref{eq:dndt}
and \eqref{eq:dEdt} gives:
\begin{equation}
  \frac{\mathrm{d}e}{\mathrm{d}t} = -\bD(t):\bsigma(t).
  \label{eq:dedt}
\end{equation}
The molecular mass enters only through the kinetic energy density $m\,n(t)\,e(t)$; it does not appear in the per-particle energy $e(t)$.
The compressibility of the flow enters only through $\deriv{n}{t}$.
For the flows with $\operatorname{tr}\bA = 0$ analyzed in \Cref{sec:results}, both $\bD(0) = \tfrac{1}{2}(\bA+\bA^\T)$ and
$\bsigma(0) = T_0\bI$ are such that
$\bD(0):\bsigma(0) = T_0\operatorname{tr}\bD(0) = 0$, so $e(t)$ has zero slope at $t=0$.
As $\bsigma$ evolves, $-\bD:\bsigma$ becomes strictly positive and $e(t)$ grows monotonically, consistent with the numerical observations.

\subsection{Affine Flows}
\label{sec:flows}

We study four affine flows, defined by different choices of $\bA$.
The symmetric (rate-of-strain) and antisymmetric (spin) parts of $\bA$ are $\bD_0:=\bfD(t=0)=\tfrac{1}{2}(\bA+\bA^\T)$ and $\bW_0:=\bfW(t=0)=\tfrac{1}{2}(\bA-\bA^\T)$\footnote{
Note that at $t = 0$, we have $\bfF=\bfI$ so the reference and spatial configurations coincide and $\bL=\bA$, making $\bfA$ a spatial velocity gradient at that instant. 
}; we use them to characterize the flows in
Table~\ref{tab:flows}.
These flows are visualized in \Cref{fig:flow_schematics} through the deformation of a reference cube of material.
For simple shear, bi-directional shear, and vortex flows, we have $\det\bF = 1$ throughout; for pressure shear,
$\det\bF(t) = 1 - 0.25t < 1$, reflecting the compression along
$x_1$.
\Cref{fig:flow_schematic_vortex_velocity} shows the instantaneous velocity field $\bV=\bL(t)\,\bx$ for the vortex-like flow at two times; the view is aligned with the vorticity axis to reveal the rotational character of the flow.

\begin{table}[htb!]
\centering
\caption{Test flows: nonzero entries of $\bA$ (all unspecified entries are zero), nonzero entries of spin $\bfW$.}
\label{tab:flows}
\scalebox{0.85}{%
\renewcommand{\arraystretch}{1.4}
\begin{tabular}{l |l| l}
\toprule
Flow
  & Nonzero entries of $\bA$
  & Nonzero $W_{ij}$
\\
\midrule
Simple shear
  & $A_{12}=+0.8$
  & $W_{12}=+0.40$
\\[4pt]
Pressure shear
  & $A_{11}=-0.25$, $A_{13}=+1.4$
  & $W_{13}=+0.70$
\\[4pt]
Bi-dir.\ shear
  & $A_{13}=+1.4$, $A_{21}=+0.9$, $A_{23}=+0.7$
  & $W_{12}=-0.45$, $W_{13}=+0.70$, $W_{23}=+0.35$
\\[4pt]
Vortex
  & $A_{13}=-1.3$, $A_{21}=+1.3$, $A_{23}=+0.7$
  & $W_{12}=-0.65$, $W_{13}=-0.65$, $W_{23}=+0.35$
\\
\bottomrule
\end{tabular}%
}
\end{table}

\begin{figure}[htbp]
	\subfloat[Deformation under simple shear ($A_{12}=0.8$). The flow is volume-preserving and the cube is sheared in the $x_1$ direction by an amount proportional to $x_2$, with no deformation in the $x_3$ direction.]{\label{fig:flow_schematics_simple_shear}\includegraphics[width=0.47\textwidth]{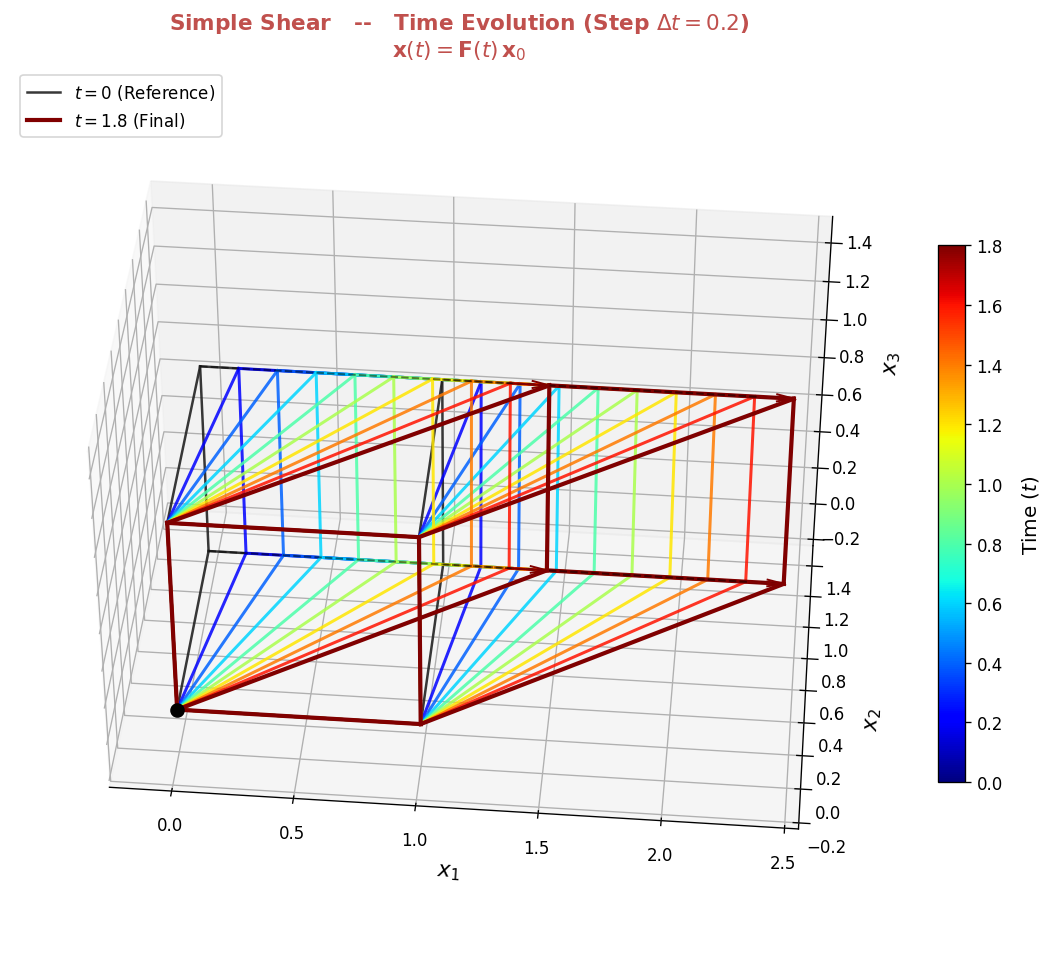}}
	\hfill
	\subfloat[Deformation under pressure shear ($A_{11}=-0.25$, $A_{13}=1.4$). The simultaneous compression along $x_1$ (visible as the shrinking of the cube in that direction making $\det(\bF)=0.55$ at $t=1.8$) and shear in the $x_1$-$x_3$ plane (visible as the progressive tilt of the faces) are discussed in \Cref{sec:dilatative}.]{\label{fig:flow_schematics_pressure_shear}\includegraphics[width=0.47\textwidth]{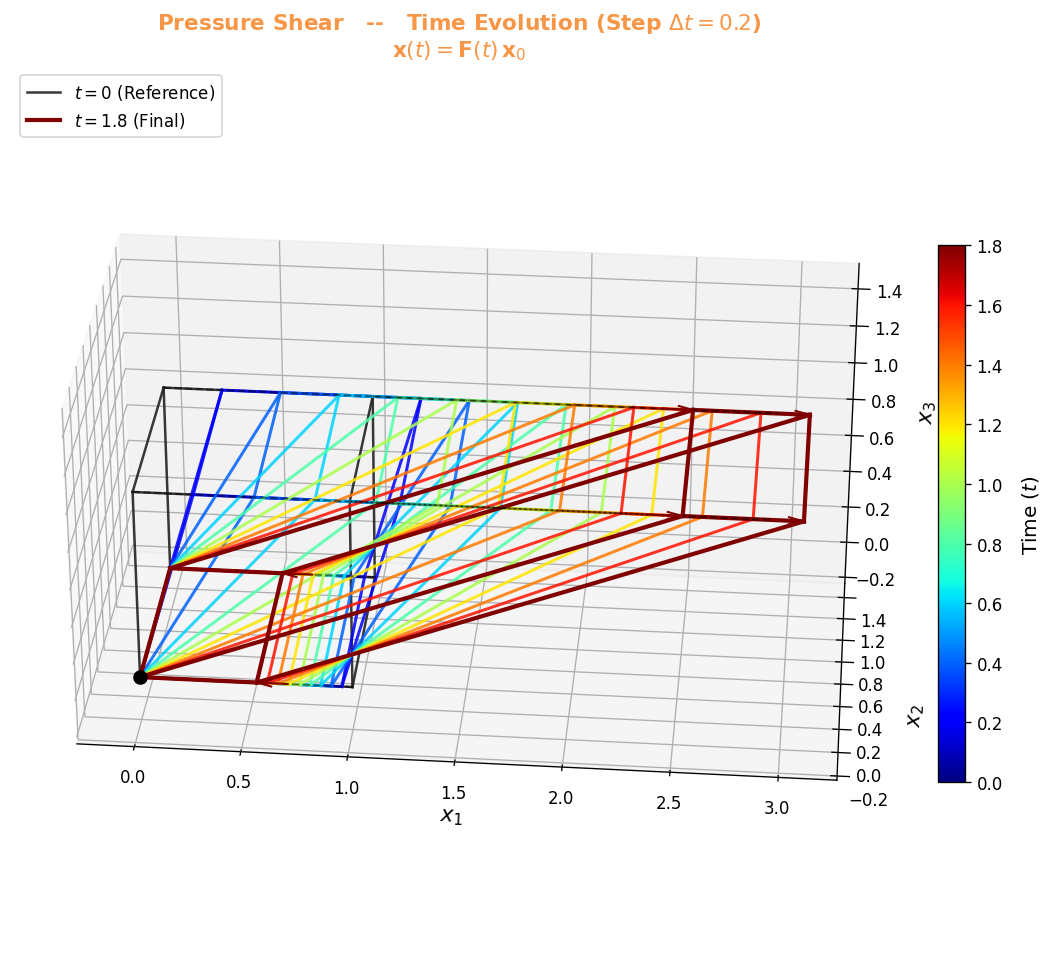}}
	\\
	\subfloat[Deformation under bi-directional shear ($A_{13}=1.4$, $A_{21}=0.9$, $A_{23}=0.7$). The three simultaneously active shear couplings produce a three-dimensional deformation: the cube develops non-trivial tilt in all three coordinate planes, in contrast to the planar character of simple and pressure shear.]{\label{fig:flow_schematics_bidir}\includegraphics[width=0.47\textwidth]{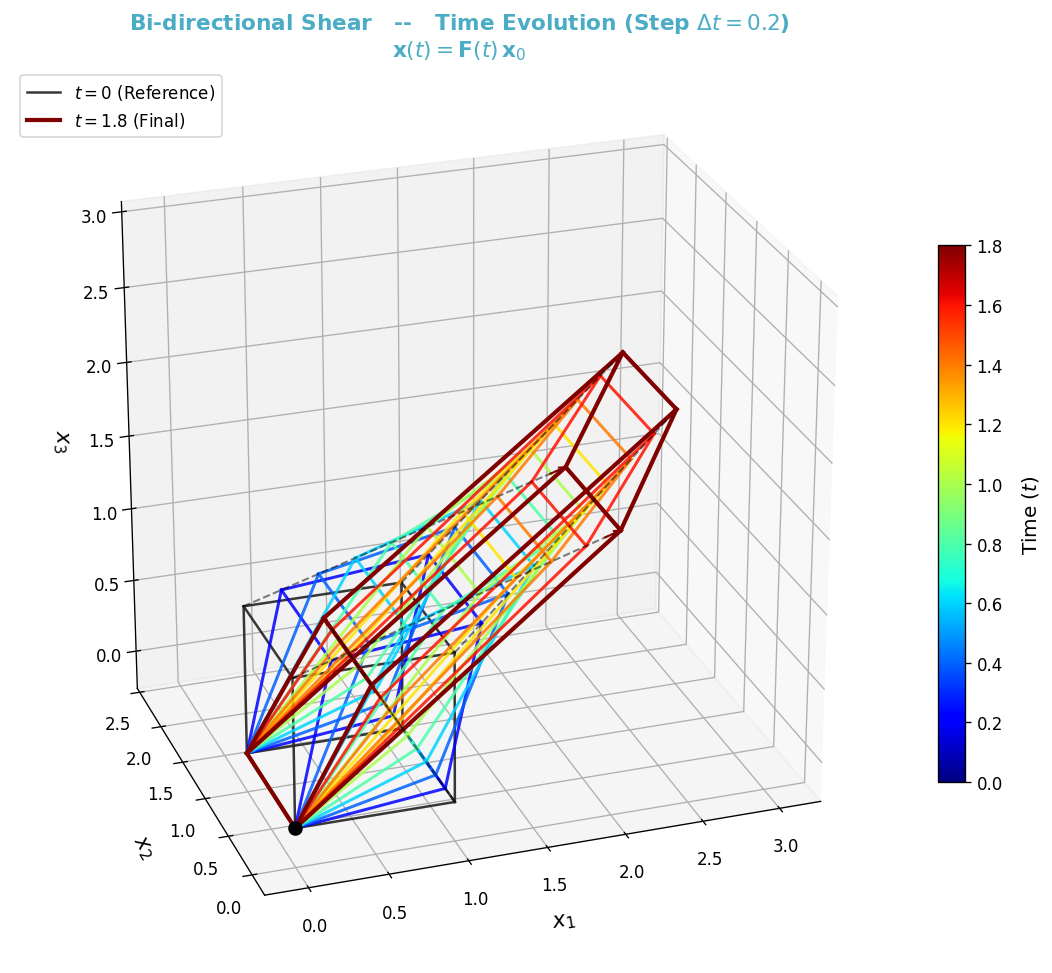}}
	\hfill
	\subfloat[Deformation under the vortex-like flow ($A_{13}=-1.3$, $A_{21}=1.3$, $A_{23}=0.7$). The dominant antisymmetric coupling ($|W_{12}|=|W_{13}|=0.65$) produces a pronounced rotational character: successive snapshots
    trace a near-circular arc in the $x_1$-$x_2$ plane, while the residual shear $A_{23}=0.7$ introduces a secondary tilt out of that plane.]{\label{fig:flow_schematics_vortex_deformation}\includegraphics[width=0.47\textwidth]{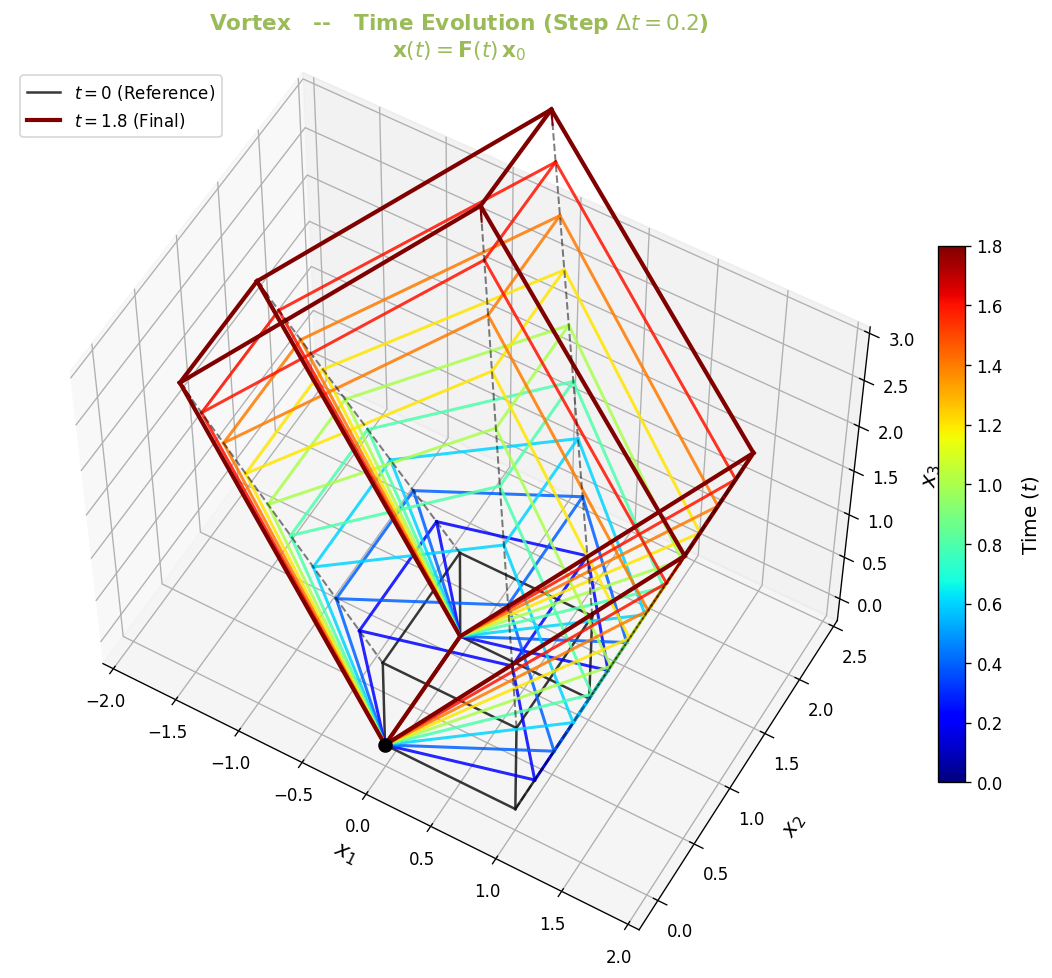}}
	\caption{Deformation of a reference cube at $t=0$ (dark grey wireframe) under the different flows at intervals of $\Delta t = 0.2$.}
	\label{fig:flow_schematics}
\end{figure}

\begin{figure}[htbp]
  \centering
  \includegraphics[width=\textwidth]{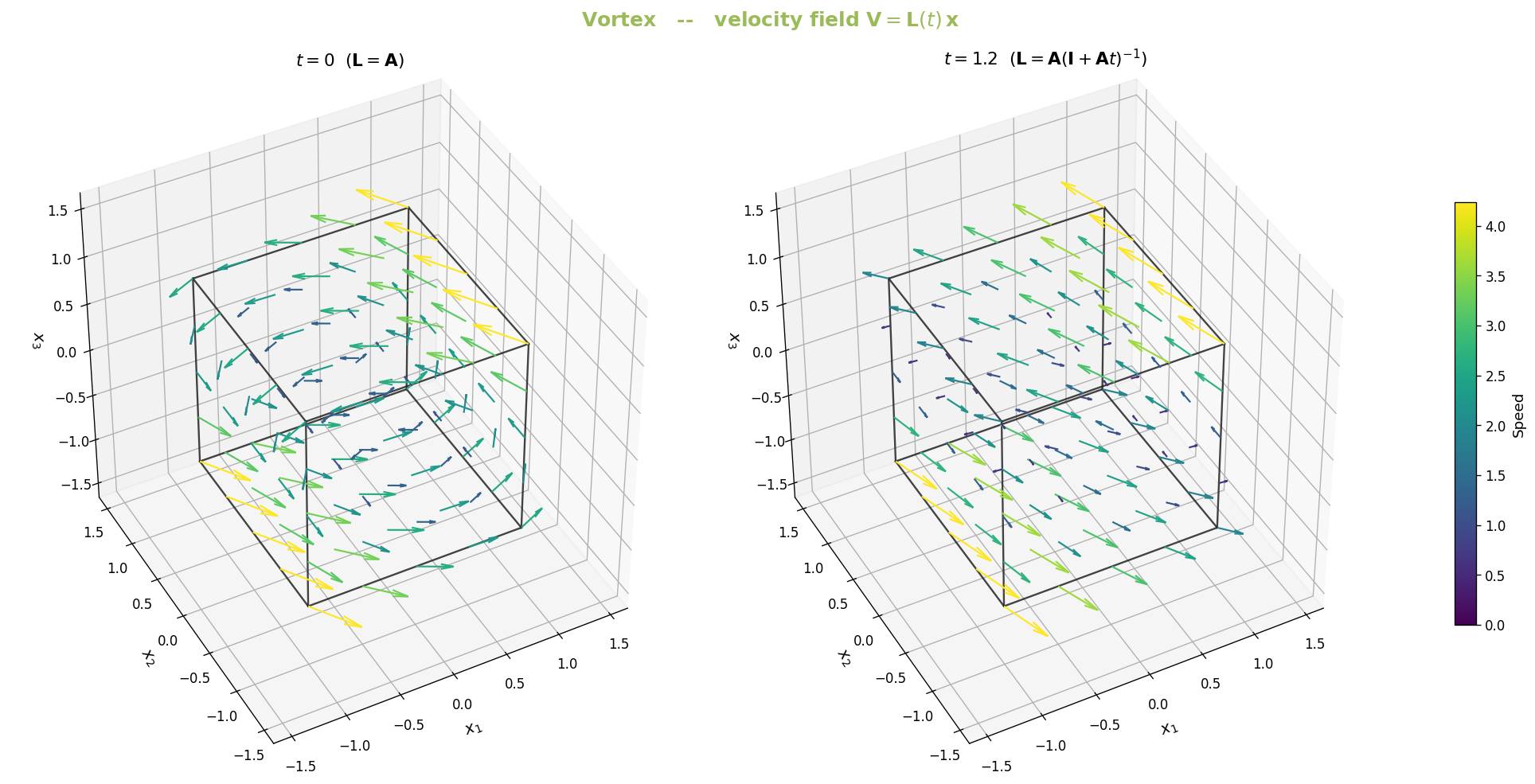}\hfill
  \caption{Instantaneous velocity field $\bV = \bL(t)\,\bx$ for the
    vortex-like flow at $t=0$ (left) and $t=1.2$ (right), evaluated
    on a uniform $2\times2\times2$ grid of sample points inside a
    cubic box (black wireframe).
    Arrows are normalized to uniform length and colored by speed.    
    The view is directed along the vorticity axis
    $\hat{\boldsymbol{\omega}} \approx (-0.36, -0.66, 0.66)$ so that
    the rotational circulation appears in the perpendicular plane.
    The field at $t=1.2$ differs visibly from $t=0$, illustrating
    the time-dependence of $\bL(t) = \bA(\bI+\bA t)^{-1}$ even when
    $\bA$ is constant.}
  \label{fig:flow_schematic_vortex_velocity}
\end{figure}

Lengths are non-dimensionalized by the mean free path $\lambda$, velocities by the most probable thermal speed
$v_\mathrm{th}=\sqrt{2k_BT_0/m}$, and time by the mean collision time $\tau=\lambda/v_\mathrm{th}$.
With this choice, the non-dimensional Boltzmann equation retains the form~\eqref{eq:reduced_bte} with the collision frequency $\nu\sim 1$.
The entries of $\bA$ in \Cref{tab:flows} are therefore the non-dimensional velocity gradients.
\section{Numerical Method}
\label{sec:numerics}
We use the Discontinuous Galerkin Finite Element Method (DG-FEM) to solve the invariance-reduced Boltzmann equation \eqref{eq:reduced_bte}, following recent work that has shown the efficacy of DG-FEM for the Boltzmann equation as well as moment closure equations based on the Boltzmann equation \cite{alekseenko2014,gamba2009,gamba2010,alekseenko2011,alekseenko2012,alekseenko2014,Barth2006Boltzmann}.
Other alternatives are Fourier-based methods \cite{pareschi2000numerical,mouhot2006fast,gamba2009} and Monte Carlo methods \cite{schwartzentruber2016nonequilibrium,bird1994,wagner1992monte,babovsky1989convergence}.

The key computational expense is in computing the collision operator.
It has been shown by \cite{alekseenko2014} that the use of symmetries and other techniques can reduce the computation from $O(n^8)$ for the brute force computation of the Boltzmann collision operator to $O(n^5)$, where $n$ is the number of quadrature points in each direction in velocity space.
Many aspects here follow closely their method, and refer to their work for details. 
We describe here only the overall structure and key points of difference. The key computational expense is in computing the collision operator.
It has been shown by \cite{alekseenko2014} that the use of symmetries
and other techniques can reduce the computation from $O(n^8)$ for the
brute force computation of the Boltzmann collision operator to
$O(n^5)$, where $n$ is the number of quadrature points in each
direction in velocity space.
Many aspects here follow closely their method, and we refer to their
work for details.
We describe here only the overall structure and key points of
difference. The principal points of difference are the following.
First, whereas \cite{alekseenko2014} solve the spatially homogeneous
Boltzmann equation, we solve the homoenergetic reduction, in which the
affine deformation enters through the transport term
$-(\bL(t)\bw)\cdot\nabla_\bw\gf$; this term is discretized in the same
reduced velocity variable $\bw$ as the collision operator, so that no
separate physical-space advection step is required and the strongly
anisotropic distributions produced by the flow are resolved on the
same mesh.
Second, the constrained conservation
correction (Sec.~\ref{sec:conservation}) is applied selectively: for the
driven affine flows only mass is enforced, with momentum and energy left
unconstrained, since the deformation continuously supplies both. Third,
for compressible flows the mass constraint is made time-dependent
(Sec.~\ref{sec:conservation_fix}), targeting $n_0/\det\bF(t)$ so that the
density follows the compression or expansion of the flow. These choices
adapt the methods of
\cite{alekseenko2014,zhang2018} to the time-dependent deformations studied here.

\subsection{DG-FEM velocity-space discretisation}
\label{sec:dg}

\paragraph*{Computational domain.}
The reduced velocity domain is truncated to $\Omega =[-w_\text{max}, w_\text{max}]^3$ with $w_\text{max}=3$ (in thermal-speed units).
We use a Maxwellian initial condition and the simulation is halted before the distribution function reaches the domain boundary to avoid domain-truncation artefacts.

\paragraph*{Mesh.}
\label{sec:mesh}
$\Omega$ is partitioned into $N_e^3$ cubic elements of equal side
length $h = 2w_\text{max}/N_e$.
Two meshes are used:
\begin{itemize}
  \item $N_e = 3$: $3^3 = 27$ elements, $h = 2$.
  \item $N_e = 5$: $5^3 = 125$ elements, $h = 1.2$.
\end{itemize}
Within each element $\Omega^k$, a local coordinate
$\bxi = (\xi_1,\xi_2,\xi_3)\in[-1,1]^3$ is introduced via the
affine map
\begin{equation}
  w_i = w_i^{(k)} + \frac{h}{2}\,\xi_i,
  \label{eq:affine_map}
\end{equation}
where $w_i^{(k)}$ is the $i$-th coordinate of the centre of $\Omega^k$.

\paragraph*{Basis functions.}
\label{sec:basis}

On each element $\Omega^k$ we use the tensor-product Lagrange basis of
degree $p=2$ (quadratic, three Gauss--Legendre nodes per direction):
\begin{equation}
  \phi_{\bfalpha}(\bxi)
  = \prod_{k=1}^{3} \ell_{\alpha_k}(\xi_k),
  \qquad \alpha_k \in \{1,2,3\},
  \label{eq:basis}
\end{equation}
where $\ell_j$ is the $j$-th Lagrange polynomial on the three
Gauss--Legendre nodes $\{-\sqrt{3/5},\,0,\,+\sqrt{3/5}\}$.
Each element has $3^3 = 27$ degrees of freedom.
The global approximation space $V_h$ has dimension $27N_e^3$
(729 for the $3^3$ mesh, 3375 for the $5^3$ mesh).

The approximate solution restricted to element $\Omega^k$ is
\begin{equation}
  \gf_h(\bw,t)\big|_{\Omega^k}
  = \sum_{|\bfalpha|=1}^{27}
    g_{\bfalpha}^{(k)}(t)\,\phi_{\bfalpha}\!\left(\Xi^k(\bw)\right),
  \label{eq:DGapprox}
\end{equation}
where $\Xi^k(\bw)$ is the affine map~\eqref{eq:affine_map} from
physical to reference coordinates and $g_{\bfalpha}^{(k)}(t)$ are
the time-dependent nodal coefficients.

\subsection{Weak formulation}
\label{sec:weak}

Multiplying~\eqref{eq:reduced_bte} by a test function $\phi\in V_h$,
integrating over an element $\Omega^k$, and integrating the drift term
by parts gives:
\begin{equation}
  \int_{\Omega^k} \phi\,\frac{\partial\gf_h}{\partial t}\,\mathrm{d}\bw
  = -\int_{\Omega^k} L_{ij}\,w_j\,\gf_h\,\frac{\partial\phi}{\partial w_i}
      \,\mathrm{d}\bw
  + \int_{\partial \Omega^k} \widehat{\mathrm{\Phi}}_n\,\phi
      \,\mathrm{d}S
  + \int_{\Omega^k} \phi\,\coll[\gf_h,\gf_h]\,\mathrm{d}\bw,
  \label{eq:weak}
\end{equation}
where $\widehat{\mathrm{\Phi}}_n$ denotes the numerical flux.

Using the DG-FEM approximation $\gf_h = \sum_j g_j(t)\phi_j(\bw)$ and $\phi_i$ as test functions, we can follow the usual finite element manipulations to arrive at the final evolution equations in the discretized form:
\begin{equation}
\label{eqn:discrete-evolution}
	\sum_j P_{ij} \dot{g}_j(t) =  - \Psi_i + \trace\left(\bfL\right) \sum_j P_{ij} g_j(t)  + \sum_j Y_{ij}g_j(t) + \sum_{k,l} Z_{ilk}g_l(t) g_k(t) 
\end{equation}
where the operators are:
\begin{equation}
  P_{ij} = \int_{\Omega^k} \phi_j\phi_i\,\mathrm{d}\bw,
  \qquad
  \Psi_i = \int_{\partial\Omega^k}\phi_i\widehat{\mathrm{\Phi}}_n\,\mathrm{d}\bw,
  \qquad
  Y_{ij}(t) = \int_{\Omega^k} \phi_j\,
               \frac{\partial\phi_i}{\partial w_r}\,L_{rm}(t)\,w_m
               \,\mathrm{d}\bw,
  \label{eq:Bw_ops}
\end{equation}
\begin{equation}
\begin{aligned}
  Z_{ijk}
  &=
  \int_{\Omega^k}\!\int_{\Omega^k}
            \frac{|\bw-\bw_*|}{2}
            \Bigl(\int_{\mathcal{S}}
              (\phi'_{*i}+\phi'_i-\phi_{*i}-\phi_i)\,\mathrm{d}S
            \Bigr)
            \phi_j(\bw_*)\,\phi_k(\bw)\,\mathrm{d}\bw_*\,\mathrm{d}\bw \\
  &=
  \int_{\Omega^k}\!\int_{\Omega^k}
            M_i\phi_j(\bw_*)\,\phi_k(\bw)\,\mathrm{d}\bw_*\,\mathrm{d}\bw ,
\end{aligned}
\label{eq:Z_op}
\end{equation}
and $M_i$ is:
\begin{equation}
    M_i =\frac{|\bw-\bw_*|}{2}
            \int_{\mathcal{S}}
              (\phi'_{*i}+\phi'_i-\phi_{*i}-\phi_i)\,\mathrm{d}S
\end{equation}

The operators have the following properties:
\begin{enumerate}
\item \textit{Sparsity and locality.} The DG basis restricts all integrations to a single element; $P, Y$, and $Z$ are accordingly sparse.
\item \textit{Uniformity.} On a uniform mesh, the operators are identical across elements, so they need only be computed once.
\item \textit{Diagonal mass matrix.} Orthogonality of the Lagrange basis makes $P$ diagonal, enabling direct inversion in the time-stepping scheme~\eqref{eqn:discrete-evolution}.
\item \textit{Time independence of $P$ and $Z$.} Both are independent of
$t$ and are precomputed and stored once; $Y$ depends on $t$ through $\bL$(t) and is recomputed each step.
\item \textit{Collision operator compression.} The symmetry properties of $M$, as discussed in \cite{alekseenko2014} lets us reduce the memory storage for the collision operator dramatically. In case of a uniform cubical mesh, all we need to do is to compute the collision operator for a single element (in our case, we compute for the central element of the domain) and the values for the rest of the elements could be obtained using the invariance property.
\item \textit{Quadrature accuracy.} In order to achieve an accuracy of $10^{-8}$ in the collision operator approximation, which is required for accurate time integration that predicts the correct relaxation times and preserves the lower moments, we use adaptive quadrature for calculating $M_i$ following \cite{alekseenko2014}.
\end{enumerate}

The term $\Psi_i$ is an integration over the element boundaries to introduce the numerical flux. 
Unlike classical FEM where inter-element continuity is enforced through the choice of shape functions, DG-FEM enforces inter-element continuity in a weak sense; in hyperbolic problems, this corresponds to a numerical flux. 
Following \cite{Barth2006Boltzmann,lax1954}, we use the upwind form of the Lax--Friedrichs numerical flux, denoted by $\widehat{\mathrm{\Phi}}_n$, for the inter-element communication in the boundary term $\Psi_i$:
\begin{equation}
  \widehat{\mathrm{\Phi}}_n = \frac{p_n}{2}\left(g^-+g^+\right)
+
\frac{|p_n|}{2}\left(g^- - g^+\right),
\qquad
p_n = L_{rs}(t)w_s \hat{n}_r,
  \label{eq:flux}
\end{equation}
where $g^\pm$ denote the traces from the interior ($-$) and exterior
($+$) sides of the face, and $\hat{n}_r$ is the outward unit normal
from the current element.
This form of the Lax--Friedrichs upwind flux is
obtained by setting $\alpha=0$ in the general formula of
\cite{hesthaven2007nodal}. Equivalently, \eqref{eq:flux} can be written as:
\begin{equation}
\widehat{\mathrm{\Phi}}_n
=
\begin{cases}
p_n g^-, & p_n\ge 0,\\
p_n g^+, & p_n<0.
\end{cases}
\label{eq:flux_simple}
\end{equation}

\subsection{Discrete evaluation of the collision tensor.}
\label{sec:collision}
The collision tensor $Z_{ijk}$ is the dominant computational cost
of the method, but it is a one-time precomputation: because
$Z_{ijk}$ is independent of both $t$ and $\gf$, it is assembled once
before the time loop and stored.
Each time step then requires only the evaluation of the matrix--vector
product:
\begin{equation}
  \dot{g}_i(t)
  = P_{ij}^{-1}
    \Bigl[
      Y_{jk}(t)\,g_k(t)
      + \operatorname{tr}(\bL(t))\,P_{jk}\,g_k(t)
      - \Psi_j
      + Z_{jkl}\,g_k(t)\,g_l(t)
    \Bigr],
  \label{eq:gdot}
\end{equation}

The key reduction follows \cite{alekseenko2014}: the use of symmetry
properties of the collision operator reduces the nominal $O(n^8)$
cost of the brute-force integral to $O(n^5)$, where $n$ is the number
of quadrature points per direction in velocity space.
This reduction exploits the orthogonality of the Lagrange basis
(Lemma~3.1 in \cite{alekseenko2014}), which means the volume integrals
over $\bw$ and $\bw_*$ in $Z_{ijk}$ reduce to products of Gauss
quadrature weights evaluated at the nodal positions, leaving only the
angular integral over $\mathcal{S}^2$ to be performed numerically.
Adaptive quadrature over $\mathcal{S}^2$ is used following
\cite{alekseenko2014} to achieve a collision operator accuracy of
$10^{-8}$, which is required to preserve lower moments and predict
correct relaxation times.

The translation-invariance of the hard-sphere kernel
(Lemma~4.2 in \cite{alekseenko2014}) means that on a uniform mesh
$Z_{ijk}$ need only be computed for a single (central) element;
the entries for all remaining elements are recovered by the
invariance property, substantially reducing both memory and
precomputation cost.
The local support of the DG basis functions additionally makes
$Z_{ijk}$ sparse, since most pairs $(\bw_j, \bw_k)$ have
non-overlapping collision spheres with respect to most test
functions $\phi_i$.

\subsection{Conservation correction}
\label{sec:conservation}

The discrete collision operator $Z_{ijk}$ does not exactly conserve
moments due to truncation of the velocity domain at $|\bw|=3$.
Following Zhang and Gamba \cite{zhang2018}, a constrained least-squares
correction is applied after each collision evaluation.
The corrected operator $Z_c$ is the solution of
\begin{equation}
  \min_{Z_c}\;\tfrac{1}{2}(Z_c-Z)^\T B(Z_c-Z),
  \qquad \text{subject to}\quad QF_c = 0,
  \label{eq:conservation_opt}
\end{equation}
where $Q$ is a constraint matrix whose rows select the moments to be
enforced.
The solution is
\begin{equation}
  Z_c = \Bigl[\bI - B^{-1}Q^\T(QB^{-1}Q^\T)^{-1}Q\Bigr]Z.
  \label{eq:conservation_soln}
\end{equation}
An important distinction between equilibrium and non-equilibrium settings applies here.
For the spatially homogeneous relaxation problem ($\bA=\bfzero$), $Q$ contains constraints for mass, momentum, and all three second moments (energy).
For the affine flows studied here ($\bA\neq\bfzero$), only mass conservation is enforced; momentum and energy are left unconstrained, because the applied deformation continuously transfers momentum and energy to the system.

\subsection{Time-dependent conservation constraint for compressible flows}
\label{sec:conservation_fix}

For an incompressible flow with $\operatorname{tr}\bA=0$, we have $\mathrm{d}n/\mathrm{d}t = -\operatorname{tr}\bL(t)\,n \implies n(t) = n_0$ for all $t$.
The projection described in \Cref{sec:conservation}
enforces  $\int g_h\,\mathrm{d}\bw = n_0$, holding the number density fixed
at its initial value and corrects against discretization errors.

For compressible flows with $\operatorname{tr}\bA \neq 0$, however,
the number density evolution is:
\begin{equation}
  n(t) = \frac{n_0}{\det\bF(t)} = \frac{n_0}{J(t)},
  \label{eq:n_evolution}
\end{equation}
where $J(t) = \det\bF(t)$ is the Jacobian of the deformation
gradient.
We therefore modify the conservation constraint to target the
time-dependent value:
\begin{equation}
  \int g_h(\bw, t)\,\mathrm{d}\bw = \frac{n_0}{J(t)}.
  \label{eq:conservation_target}
\end{equation}
Equivalently, the constraint enforces
$\mathrm{d}n/\mathrm{d}t = -\operatorname{tr}\bL(t)\,n(t)$.

For the pressure-shear flow studied in \Cref{sec:dilatative},
the modified constraint allows for compressibility and \Cref{fig:dilatative_n} shows that the resulting number density tracks the closed-form expression $n_0/(1+0.3t)$ to within numerical precision.

\subsection{Time integration}
\label{sec:time}

The system of ODEs for the nodal coefficients $\{g_{\bm\alpha}^{(e)}(t)\}$ obtained from~\eqref{eq:weak} is integrated using a two-stage explicit multi-step scheme \cite{hairer1993}.
The first four time steps use a fifth-order Runge--Kutta method to generate the startup history required by the multi-step integrator.
All subsequent steps use the fifth-order Adams--Bashforth formula:
\begin{equation}
  \bm{g}^{n+1} = \bm{g}^n
    + \frac{\Delta t}{720}\bigl[
        1901\,\bm{R}^n
      - 2774\,\bm{R}^{n-1}
      + 2616\,\bm{R}^{n-2}
      - 1274\,\bm{R}^{n-3}
      +  251\,\bm{R}^{n-4}
      \bigr],
  \label{eq:AB5}
\end{equation}
where $\bm{R}^k = \bm{R}(\bm{g}^k, t^k)$ collects the right-hand side of~\eqref{eq:weak} for all degrees of freedom at step $k$.
The time step is $\Delta t = 10^{-3}$.

\subsection{Code verification}
\label{sec:verification}

The DG-FEM implementation was verified in two stages that test the collision
operator and the transport operator independently; full details and
figures are given in \cite{debnath2018}.

\paragraph{Collision operator: spatially homogeneous relaxation.}
Setting $\bA=\bm{0}$ eliminates the transport term and isolates the
collision operator.
Two experiments were run on both meshes.
Experiment~1 uses a sum of two shifted Maxwellians at different
temperatures and bulk velocities as the initial condition:
\begin{equation}
  g(0,\bw) = \frac{\rho_1}{(2\pi T_1)^{3/2}}
              \exp\!\Bigl(-\tfrac{|\bw-\bv_1|^2}{2T_1}\Bigr)
           + \frac{\rho_2}{(2\pi T_2)^{3/2}}
              \exp\!\Bigl(-\tfrac{|\bw-\bv_2|^2}{2T_2}\Bigr).
  \label{eq:Iw_relax}
\end{equation}
Experiment~2 uses discontinuous (top-hat) initial data.
In both cases the solution converges to a single isotropic Maxwellian,
consistent with Boltzmann's $H$-theorem.
Mass, momentum, and energy are conserved throughout to relative
deviations below $10^{-6}$, confirming the correctness of the
collision operator computation and the conservation correction of
Section \ref{sec:conservation}.

\paragraph{Transport operator: method of manufactured solutions.}
The transport term of~\eqref{eq:reduced_bte} was verified independently using the method of manufactured solutions \cite{roache2002}.
The manufactured solution
\begin{equation}
  g_\text{ms}(t,\bw) = \sin(t)\exp(-|\bw|^2)
  \label{eq:mms_soln}
\end{equation}
satisfies the initial and boundary conditions for all $t$.
Substituting into $\partial_t g + \bp\cdot\nabla_{\bw} g = 0$
(where $\bp=-\bL(t)\bw$) produces the source term
\begin{equation}
  f_\text{ms}(t,\bw)
  = \exp(-|\bw|^2)\bigl(\cos t - 2\sin t\;(\bw\cdot\bp)\bigr).
  \label{eq:mms_source}
\end{equation}
Simple shear ($\bA=0.8\,\be_1\otimes\be_2$) was used for $\bL(t)$.
The fifth-order Adams--Bashforth scheme recovers~\eqref{eq:mms_soln}
with errors consistent with the expected order of accuracy,
confirming the correctness of both the DG-FEM discretisation and the
time integration.

\subsection{Mesh convergence}
\label{sec:meshconv}

Both the $3^3$ and $5^3$ meshes are run for all four flows.
The principal axis angle of the fitted Gaussian (see Section \ref{sec:kinematic}) agrees to within $1^\circ$ between the two meshes at every time step and for every flow.
The fitted eigenvalue magnitudes show a systematic offset of approximately $10$--$15\%$ between meshes; the $5^3$ mesh gives larger eigenvalues, consistent with the finer resolution capturing more anisotropy before numerical dissipation reduces it.
All quantitative results reported in Section \ref{sec:results} use the $5^3$ mesh unless otherwise noted.

\section{Free-Streaming Kinematic Prediction}
\label{sec:kinematic}

In the absence of collisions, the reduced equation~\eqref{eq:reduced_bte} reduces to the pure advection problem:
\begin{equation}
  \frac{\partial\gf}{\partial t}
  = L_{ij}(t)\,w_j\,\frac{\partial\gf}{\partial w_i},
  \label{eq:freestream_pde}
\end{equation}
The characteristics of~\eqref{eq:freestream_pde} satisfy
\begin{equation}
  \dot{\bw} = -\bL(t)\bw,
  \label{eq:char_ode}
\end{equation}
with solution
\begin{equation}
  \bw(t) = \bM(t)\,\bw(0),
  \label{eq:char_soln}
\end{equation}
where $\bM(t)$ satisfies $\dot{\bM}=-\bL\bM$, $\bM(0)=\bI$.
Along these characteristics $\gf$ is constant:
\begin{equation}
    \deriv{}{t}\gf(\bw(t),t) 
    = \parderiv{\gf}{t}
    + \dot{w}_i\parderiv{\gf}{w_i}
    = L_{ij}w_j\parderiv{\gf}{w_i} + (-L_{ij}w_j)\parderiv{\gf}{w_i} = 0    
\end{equation}

\begin{proposition}[Characteristic solution]
\label{prop:M}
For the homoenergetic velocity gradient $\bL(t)=\bA(\bI+t\bA)^{-1}$,
the characteristic map is
\begin{equation}
  \bM(t) = (\bI+t\bA)^{-1} = \bF^{-1}(t),
  \label{eq:M}
\end{equation}
where $\bF(t)=\bI+t\bA$ is the deformation gradient.
\end{proposition}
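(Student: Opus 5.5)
The plan is to verify directly that $\bF^{-1}(t)=(\bI+t\bA)^{-1}$ satisfies the defining initial-value problem $\dot{\bM}=-\bL\bM$, $\bM(0)=\bI$, and then invoke uniqueness for linear ODEs. The initial condition is immediate, since $\bF(0)=\bI$.

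For the differential equation, I will use the derivative-of-inverse identity already employed in \Cref{sec:formulation} when showing $\dot{\bL}=-\bL^2$. Differentiating $\bF\bF^{-1}=\bI$ gives
\begin{equation*}
  \frac{\mathrm{d}}{\mathrm{d}t}\bF^{-1} = -\bF^{-1}\dot{\bF}\,\bF^{-1} = -(\bI+t\bA)^{-1}\bA\,(\bI+t\bA)^{-1}.
\end{equation*}
This has to be matched with $-\bL\bF^{-1}=-\bA(\bI+t\bA)^{-1}(\bI+t\bA)^{-1}$. The one point needing care is the order of the factors, because $\bL$ is defined with $\bA$ on the left. I will use the fact that $\bA$ commutes with $\bI+t\bA$, and therefore with its inverse wherever the inverse exists. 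This gives $(\bI+t\bA)^{-1}\bA=\bA(\bI+t\bA)^{-1}$, so $\bL=\bA\bF^{-1}=\bF^{-1}\bA$. The two expressions then coincide. Equivalently, using $\dot{\bF}=\bA=\bL\bF$, one has $\frac{\mathrm{d}}{\mathrm{d}t}\bF^{-1}=-\bF^{-1}\bL\bF\bF^{-1}=-\bF^{-1}\bL$, and $\bF^{-1}\bL=\bL\bF^{-1}$ by the same commutation.

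Finally, the ODE $\dot{\bM}=-\bL(t)\bM$ is linear, and its coefficient $\bL(t)$ is continuous on any interval where $\bI+t\bA$ is nonsingular, which is the standing assumption in \Cref{sec:formulation}. Picard--Lindel\"of uniqueness therefore identifies $\bM(t)=\bF^{-1}(t)$ on the maximal interval containing $t=0$. Substituting into \eqref{eq:char_soln} gives $\bw(t)=\bF^{-1}(t)\bw(0)$.

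The only real obstacle is the noncommutativity issue in the middle step. It is resolved entirely by noting that $\bL$ is a rational function of the single matrix $\bA$, so every factor involved commutes.
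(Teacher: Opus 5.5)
Your proposal is correct and follows essentially the same route as the paper: differentiate $(\bI+t\bA)^{-1}$, use the fact that $\bA$ commutes with $(\bI+t\bA)^{-1}$ to identify the result with $-\bL\bM$, and conclude from $\bM(0)=\bI$. Your explicit appeal to uniqueness for the linear ODE, on the interval where $\bI+t\bA$ is nonsingular, just makes rigorous the paper's final step.
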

\begin{proof}
Let $\bM=(\bI+t\bA)^{-1}$.  Differentiating:
\begin{align*}
  \dot{\bM}
  &= -(\bI+t\bA)^{-1}\bA(\bI+t\bA)^{-1}.
\end{align*}
Since $\bA$ commutes with $(\bI+t\bA)$, we have $(\bI+t\bA)^{-1}\bA = \bA(\bI+t\bA)^{-1}$, so
\begin{align*}
  \dot{\bM}
  = -\bA(\bI+t\bA)^{-2}
  = -\bA(\bI+t\bA)^{-1}\cdot(\bI+t\bA)^{-1}
  = -\bL\bM.
\end{align*}
Since $\bM(0)=\bI$, this gives $\bM(t)=\bF^{-1}(t)$.
\end{proof}

Since $\gf$ is constant along characteristics and the characteristic
passing through $\bw$ at time $t$ originated at $\bw_0=\bF(t)\bw$,
the free-streaming solution is
\begin{equation}
  \gf_\text{fs}(\bw,t) = \gf_0\!\left(\bF(t)\bw\right).
  \label{eq:fs_solution}
\end{equation}

Here $\bw$ and $\bw_0$ are the values of the same characteristic in the reduced velocity space at times $t$ and $0$, both elements of the same $\mathbb{R}^3$; accordingly $\bF(t)$ enters
\eqref{eq:fs_solution} as the linear map $\bM^{-1}(t)$ on velocity
space defined in Proposition~\ref{prop:M}, not as a two-point
deformation gradient acting on a spatial vector.
The distribution function value is simply transported along characteristics unchanged.
The number density does change for compressible flows: $n(t)=\int\gf_\text{fs}\,\mathrm{d}\bw = n_0/|\det\bF(t)|$,
as verified by integrating~\eqref{eq:freestream_pde} over $\bw$.

\subsection{Evolution of the Covariance}
\label{sec:cov_pred}

For the Maxwellian initial condition $\gf_0(\bw)=(2\pi)^{-3/2}\exp(-|\bw|^2/2)$, the free-streaming solution~\eqref{eq:fs_solution} is:
\begin{equation}
  \gf_\text{fs}(\bw,t)
  = \frac{1}{(2\pi)^{3/2}}
     \exp\!\bigl(-\tfrac{1}{2}\bw^\T\bF^\T\bF\,\bw\bigr).
  \label{eq:fs_gaussian}
\end{equation}
The covariance is defined as 
$\Sigma_{mn}(t):=\frac{\int w_mw_n\gf_\text{fs}\,\dm\bw}{\int\gf_\text{fs}\,\dm\bw}$.
Substituting $\bu=\bF\bw$ (implying $\bw=\bM\bu$, $\mathrm{d}\bw=|\det\bF|^{-1}\mathrm{d}\bu$):
\begin{equation*}
  \int w_mw_n\gf_\text{fs}\,\mathrm{d}\bw
  = \frac{|\det\bF|^{-1}}{(2\pi)^{3/2}}
    M_{mi}M_{nj}\!\int u_iu_j e^{-|\bu|^2/2}\,\mathrm{d}\bu
  = |\det\bF|^{-1}M_{mi}M_{ni},
\end{equation*}
and using $\int\gf_\text{fs}\,\mathrm{d}\bw = |\det\bF|^{-1}$ gives:
\begin{equation}
  \bsigma_\text{fs}(t) = \bM\bM^\T = (\bF^\T\bF)^{-1} = [(\bI+t\bA)^\T(\bI+t\bA)]^{-1} = \bigl[\bI + 2t\bD_0 + t^2\bA^\T\bA\bigr]^{-1}.
  \label{eq:Cpred}
\end{equation}
where $\bD_0=\tfrac{1}{2}(\bA+\bA^\T)$ is the rate-of-strain tensor evaluated at $t=0$, to be distinguished from the time-dependent spatial
rate-of-strain $\bD(t)=\operatorname{sym}\bL(t)$ appearing in the
energy balance~\eqref{eq:dedt}.
At leading order in $t$, $\bsigma_\text{fs}(t)\approx\bI - 2t\bD_0$, so the initial rate of change of the covariance is controlled entirely by the strain rate $\bD_0$.

\section{Results}
\label{sec:results}

\subsection{Anisotropic Gaussian structure}
\label{sec:results_gaussian}

The central finding in this paper is that the reduced velocity distribution is close to an anisotropic Gaussian away from equilibrium.
This can be observed visually: 
\Cref{fig:g_at_1200} shows representative planar views of the isotropic gaussian velocity distribution at $t=0$ on the $w_3=0$ plane, and the development of anisotropy at $t=1.21$,
and \Cref{fig:g_evolution} shows the evolution of $g$ on cross-sectional slices.

\begin{figure}[htbp]
  \centering
  \subfloat[Initial condition\label{fig:in_cond}]{\includegraphics[width=0.50\textwidth]{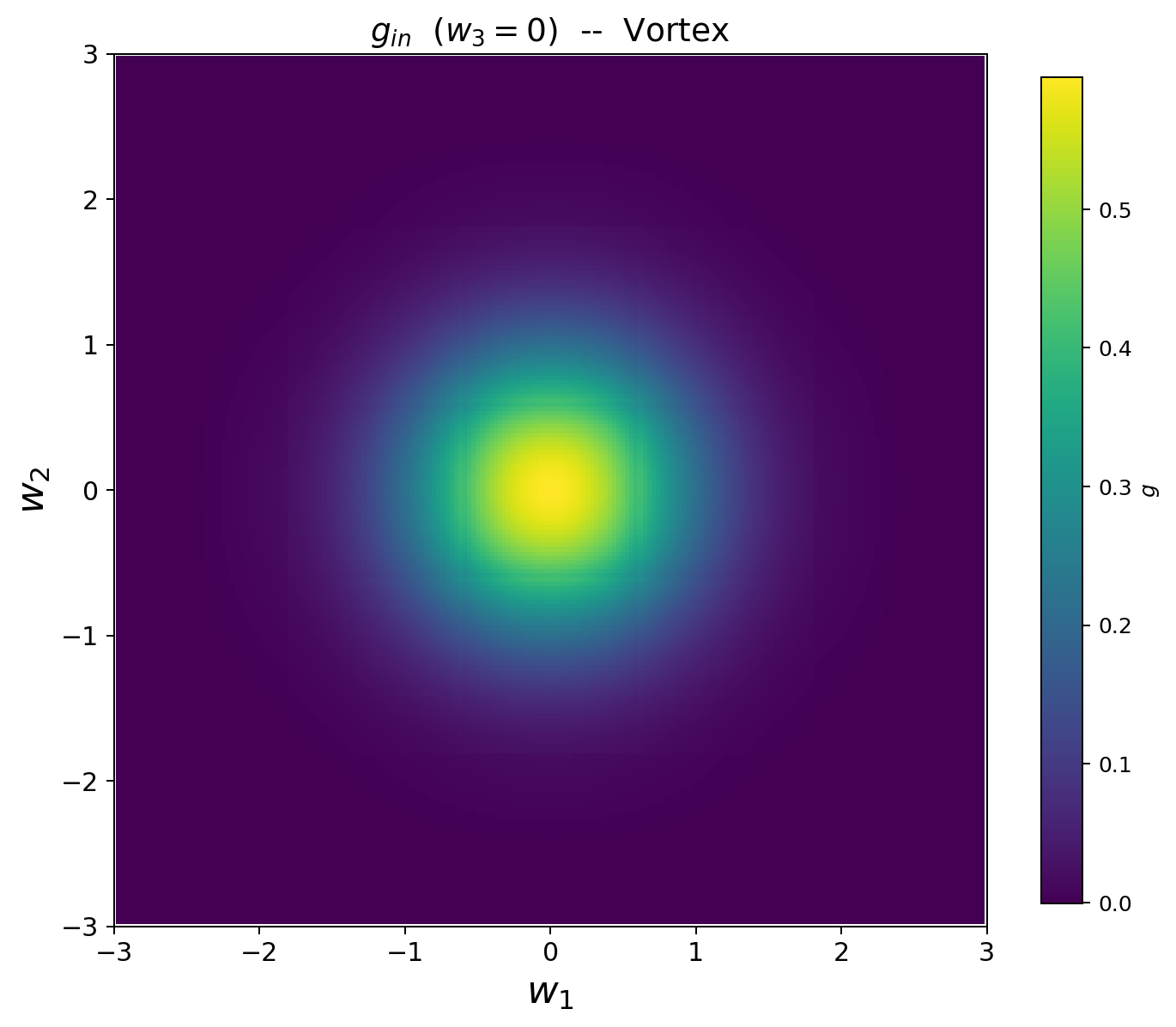}}\hfill
  \subfloat[$g_{out}$ at $t=1.21$ on plane $w_3=0$\label{fig:g_out_w12}]{\includegraphics[width=0.50\textwidth]{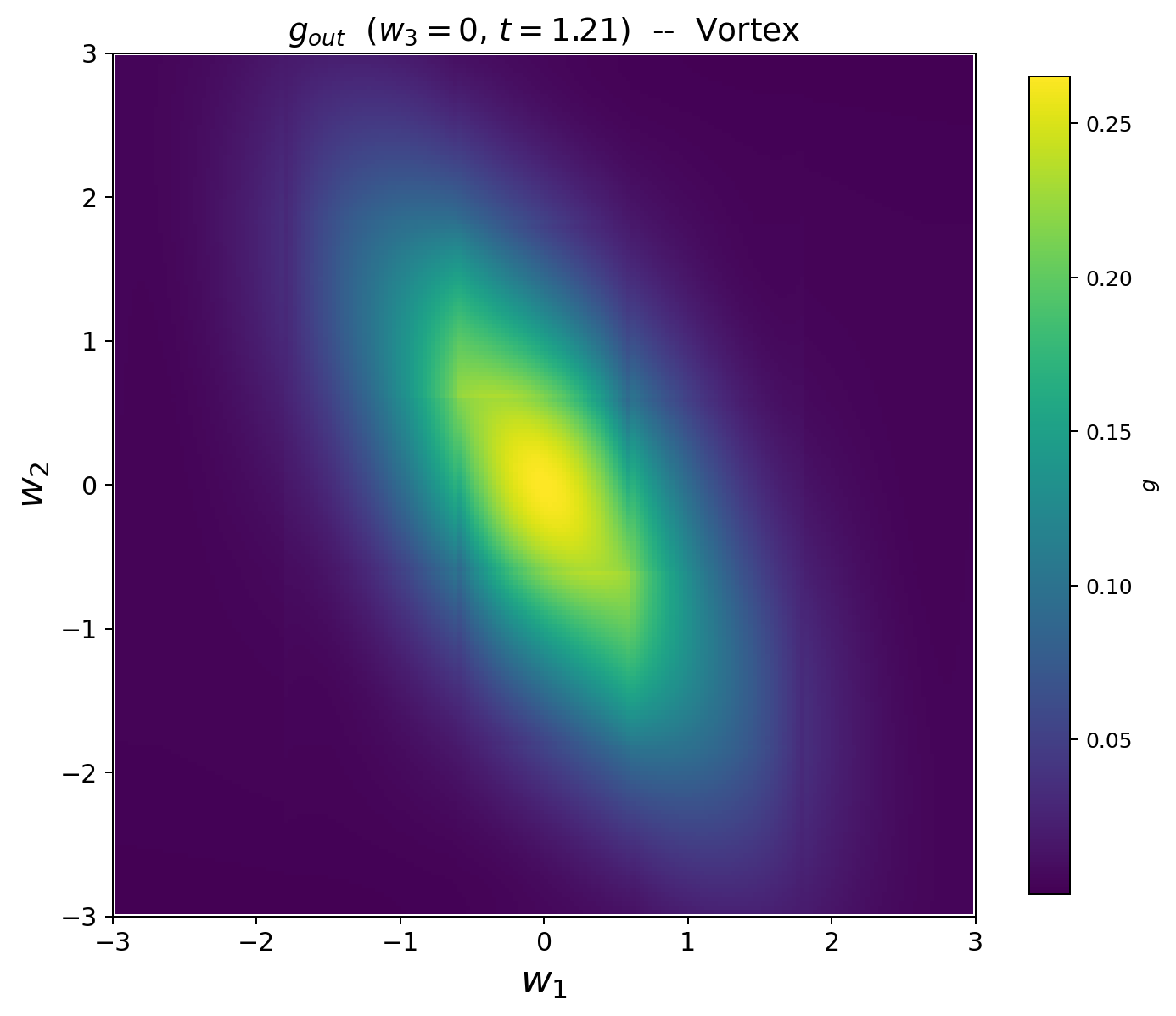}}\\[1ex]
  \subfloat[$g_{out}$ at $t=1.21$ on plane $w_2=0$\label{fig:g_out_w13}]{\includegraphics[width=0.50\textwidth]{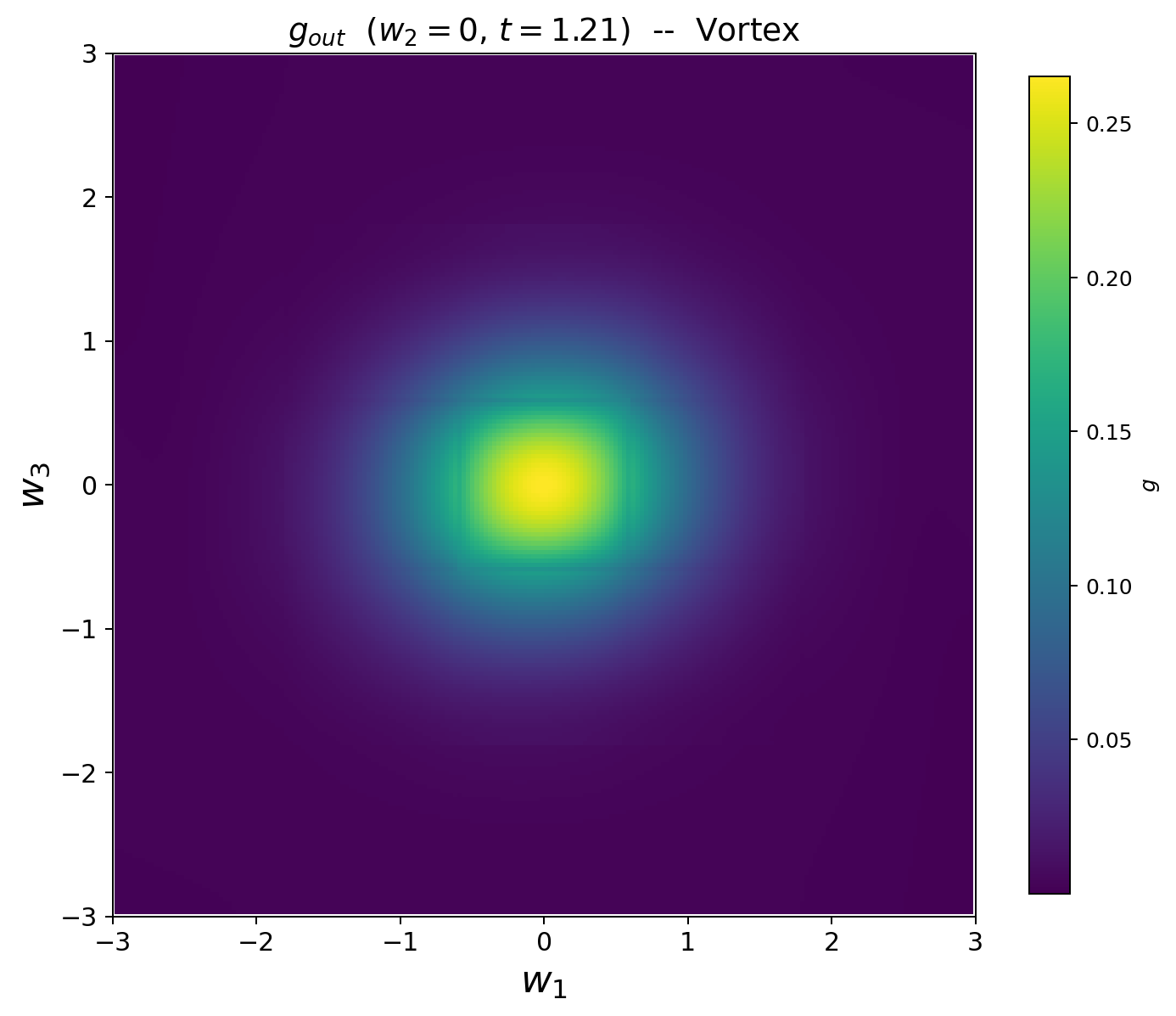}}\hfill
  \subfloat[$g_{out}$ at $t=1.21$ on plane $w_1=0$\label{fig:g_out_w23}]{\includegraphics[width=0.50\textwidth]{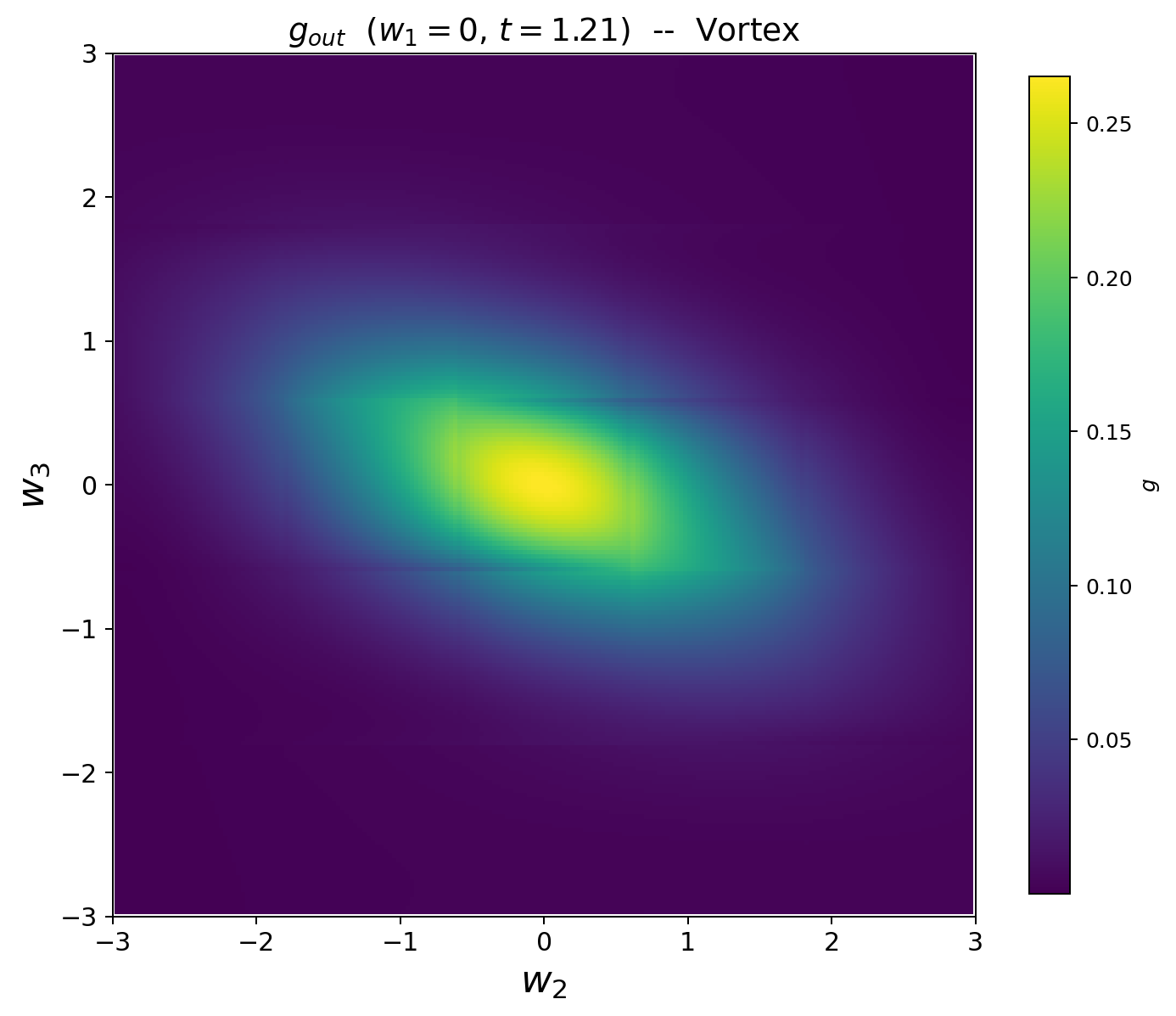}}
  \caption{Cross-sectional slice view of $g$ on the $3$ axes on the $5^3$ mesh for the vortex like flow.}
  \label{fig:g_at_1200}
\end{figure}

\begin{figure}[htbp]
  \centering
  \subfloat[Evolution of $g$ on $w_2=0,w_3=0$\label{fig:g_out_w1}]{\includegraphics[width=0.50\textwidth]{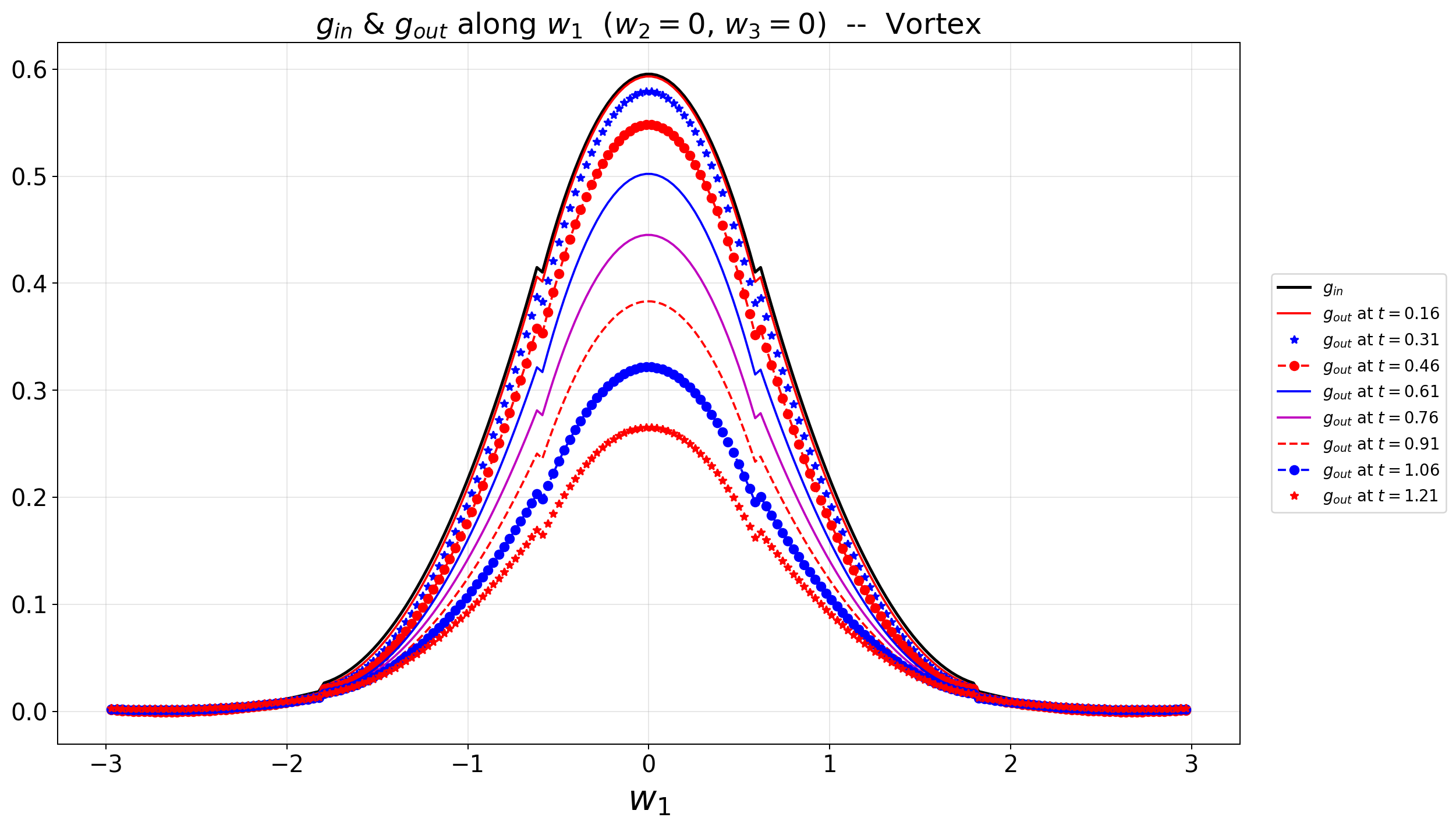}}\\[1ex]
  \subfloat[Evolution of $g$ on $w_1=0,w_2=0$\label{fig:g_out_w2}]{\includegraphics[width=0.50\textwidth]{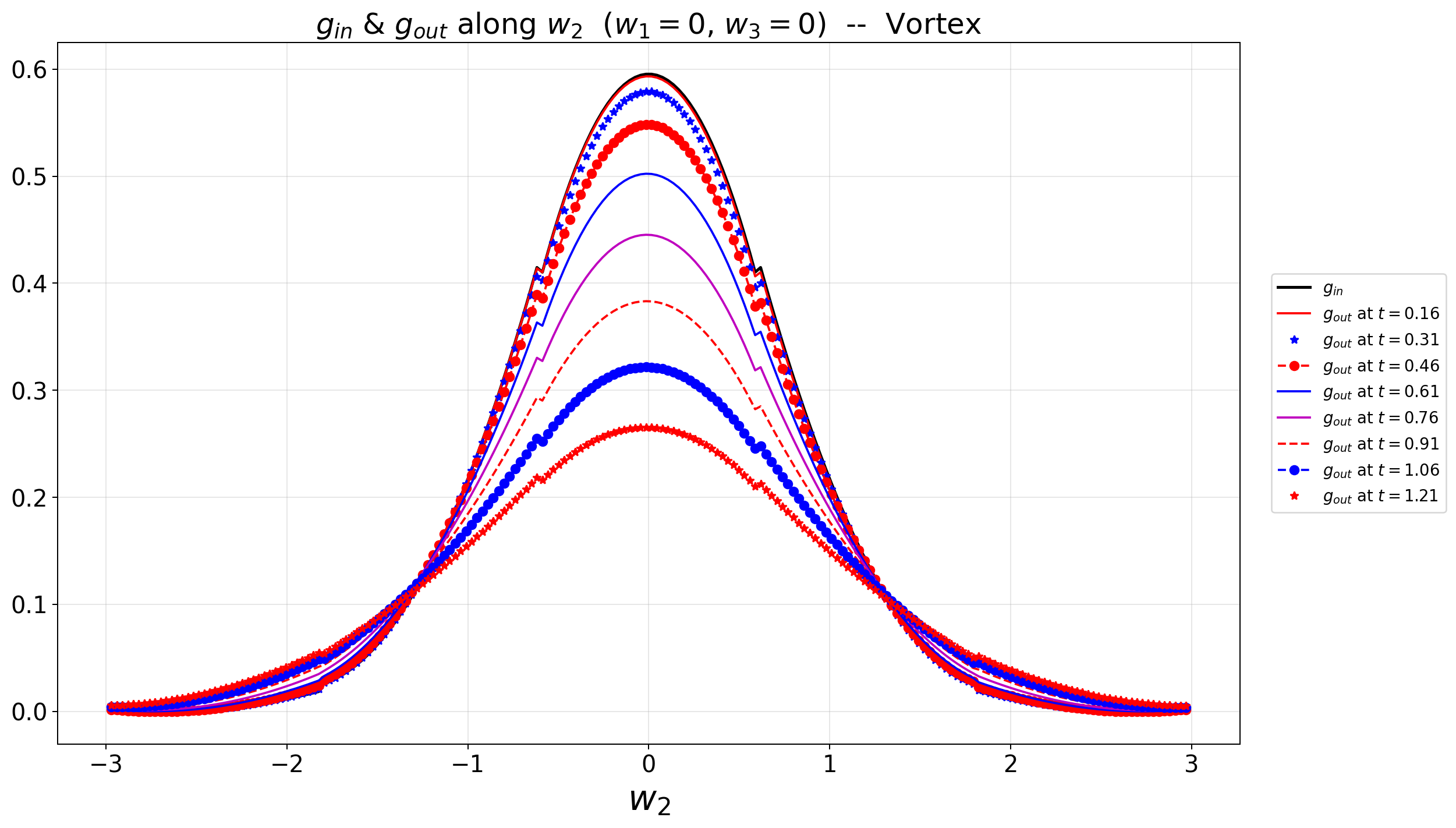}}\hfill
  \subfloat[Evolution of $g$ on $w_1=0,w_2=0$\label{fig:g_out_w3}]{\includegraphics[width=0.50\textwidth]{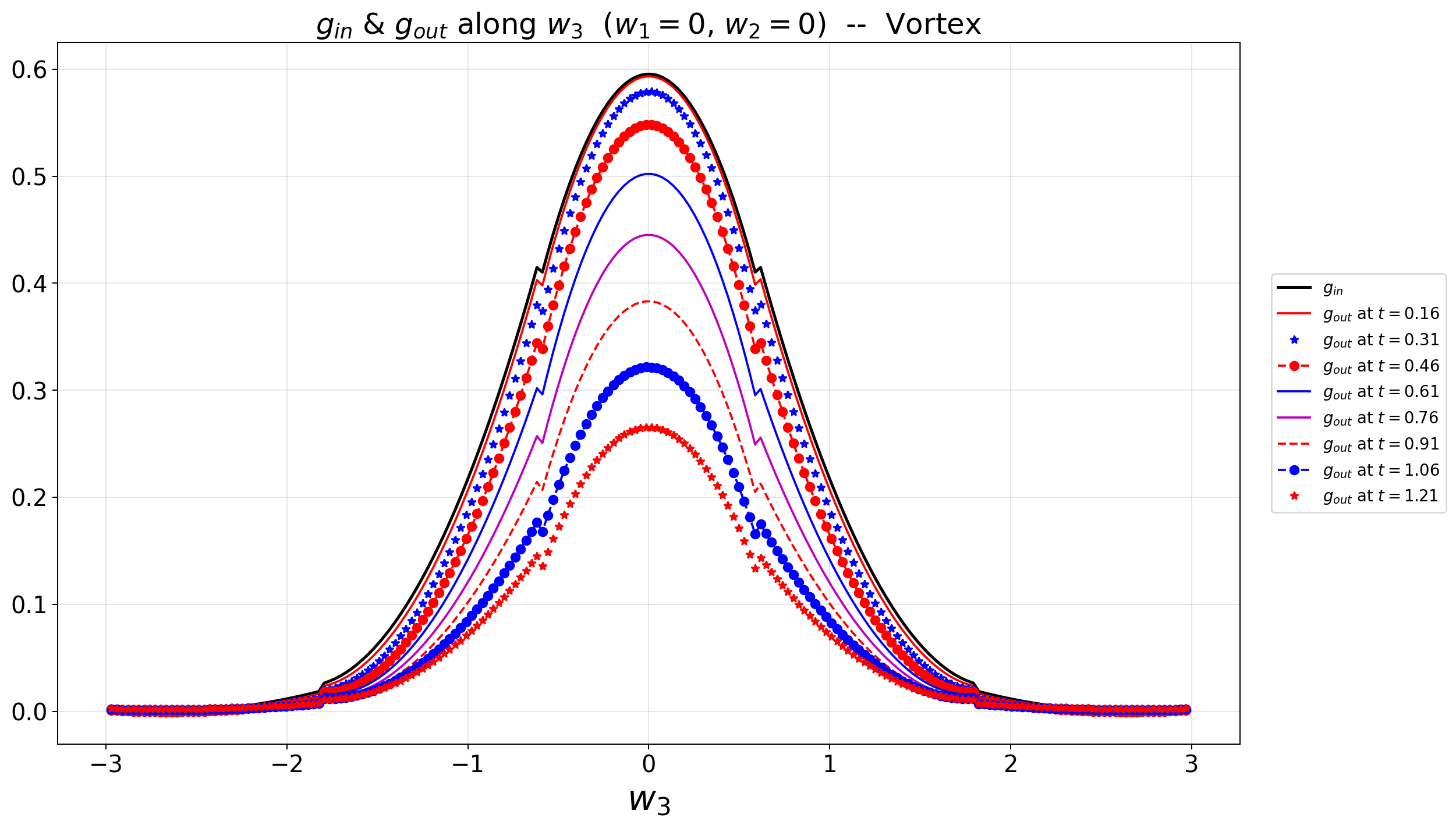}}
  \caption{Cross-sectional slice view of $g$ on the $3$ planes on the $5^3$ mesh for the vortex like flow.}
  \label{fig:g_evolution}
\end{figure}

We fit the velocity distribution at various times for all of the flows to an anisotropic (i.e., multivariate) Gaussian as described in \Cref{app:lm} using the $5^3$ (125-element) mesh.
\Cref{fig:residual} provides the relative residual $r = \|\gf_h - \gf_\text{fit}\|_{L_2}/\|\gf_h\|_{L_2}$ that quantifies the quality of the Gaussian approximation.
\Cref{fig:eigenvalues,fig:anisotropy,fig:angles} show the evolution of the fitted covariance through the eigenvalues $\lambda_1\geq\lambda_2\geq\lambda_3$ of $\bsigma$, the anisotropy ratio $\lambda_1/\lambda_3$, and the primary-plane principal axis angle, and are discussed in detail below.

\begin{figure}[htbp]
  \centering
  \includegraphics[width=0.6\textwidth]{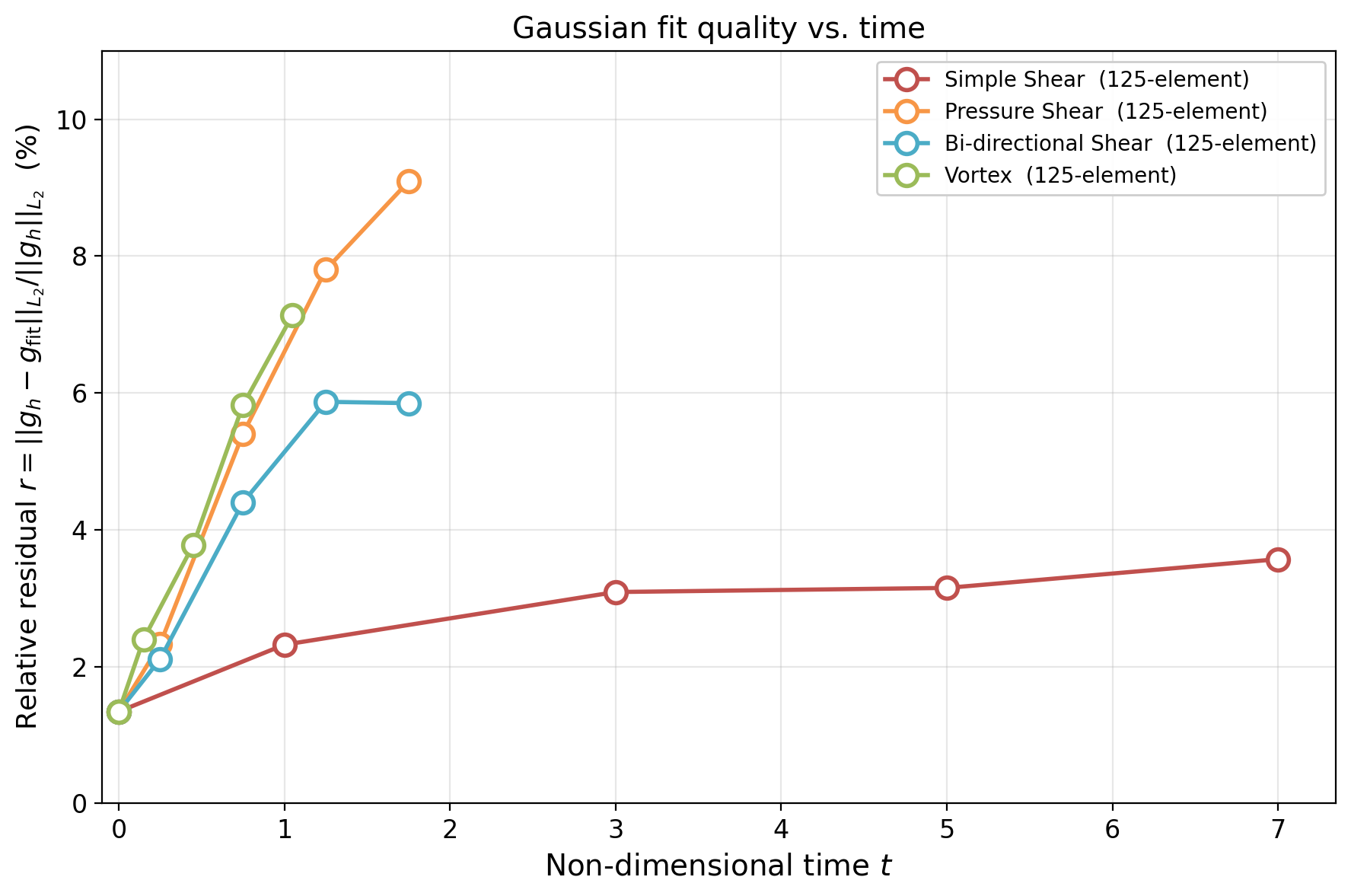}\hfill
  \caption{Relative $L_2$ residual of the Gaussian fit,
    $r = \|g_h - g_\text{fit}\|_{L_2} / \|g_h\|_{L_2}$, vs.\ time for the four affine flows on the $5^3$ mesh.}
  \label{fig:residual}
\end{figure}

\paragraph*{Simple shear.}
\label{sec:results_ss}

The largest eigenvalue $\lambda_1$ grows from $0.73$ to $3.53$ (factor
of $4.8$) over the simulation, while $\lambda_3$ increases by only a
factor of $2.0$.
The anisotropy ratio $\lambda_1/\lambda_3$ grows monotonically from
$1.91$ to $4.61$.
Off-diagonal covariance components $\Sigma_{13}$ and $\Sigma_{23}$ are
numerically negligible ($\lesssim 10^{-6}$), confirming that the flow
is purely two-dimensional in the $w_1$-$w_2$ plane, as expected from
$\bA$.

\paragraph*{Pressure shear.}

Pressure shear couples compression along $w_1$ ($A_{11}=-0.25$) with
shear in the $w_1$-$w_3$ plane ($A_{13}=1.4$).
The off-diagonal components $\Sigma_{12}$ and $\Sigma_{23}$ are negligible
($\lesssim 10^{-6}$), confirming a purely 2D flow in the $w_1$-$w_3$
plane.
The anisotropy ratio grows more rapidly than for simple shear
(reaching $6.84$ at $t=1.75$ vs.\ $4.61$ for simple shear at $t=7.0$
per unit time elapsed), consistent with the larger shear component
$A_{13}=1.4$.

\paragraph*{Bi-directional shear.}
Unlike simple shear and pressure shear, whose dynamics remain confined to a single reduced velocity-space plane ($w_1$-$w_2$ and $w_1$-$w_3$ respectively), bi-directional shear drives anisotropy in a fully
three-dimensional manner.
The three shear couplings $A_{12}=0.9$, $A_{13}=1.4$, and $A_{23}=0.7$ act simultaneously, so the principal eigenvector of $\bsigma(t)$ acquires non-negligible components along all three velocity axes, and all three planar marginals show significant principal-axis rotation (\Cref{fig:kinematic}).
This flow therefore exercises the full six-component covariance structure rather than a single 2$\times$2 sub-block, demonstrating that the anisotropic-Gaussian description and the free-streaming prediction $\bC^{-1}$ extend to genuinely three-dimensional anisotropy.

\paragraph*{Vortex.}

The vortex flow shows the fastest anisotropy growth among the four flows: $\lambda_1/\lambda_3$ reaches $9.95$ at $t=1.05$, whereas simple shear requires $t=7.0$ to reach $4.61$.
The eigenvalues $\lambda_2$ and $\lambda_3$ decrease initially over time (unlike simple shear, where all eigenvalues grow), indicating that the vortex preferentially concentrates the variance along a single direction and then eventually increases.

\subsection{Eigenvalue and Anistropy Evolution}
\label{sec:results_anisotropy}


\Cref{fig:eigenvalues} shows the eigenvalues of the fitted covariance matrix as functions of time for all four flows on both
meshes.
In all four flows, $\lambda_1$ grows significantly while $\lambda_2$ and $\lambda_3$ remain comparatively bounded, producing a characteristic one-directional stretching of the velocity distribution.
The two meshes agree closely on all eigenvalues; the $5^3$ mesh produces slightly larger $\lambda_1$ values (by $10$--$15\%$) due to its lower numerical diffusion.
A notable difference among the flows concerns the minor eigenvalues: for simple shear all three eigenvalues grow (the distribution stretches without compression), whereas for the vortex $\lambda_2$ and $\lambda_3$ decrease, reflecting the concentration of variance onto a single dominant direction by the rotational kinematics.

\begin{figure}[htb!]
  \centering
  \includegraphics[width=\textwidth]{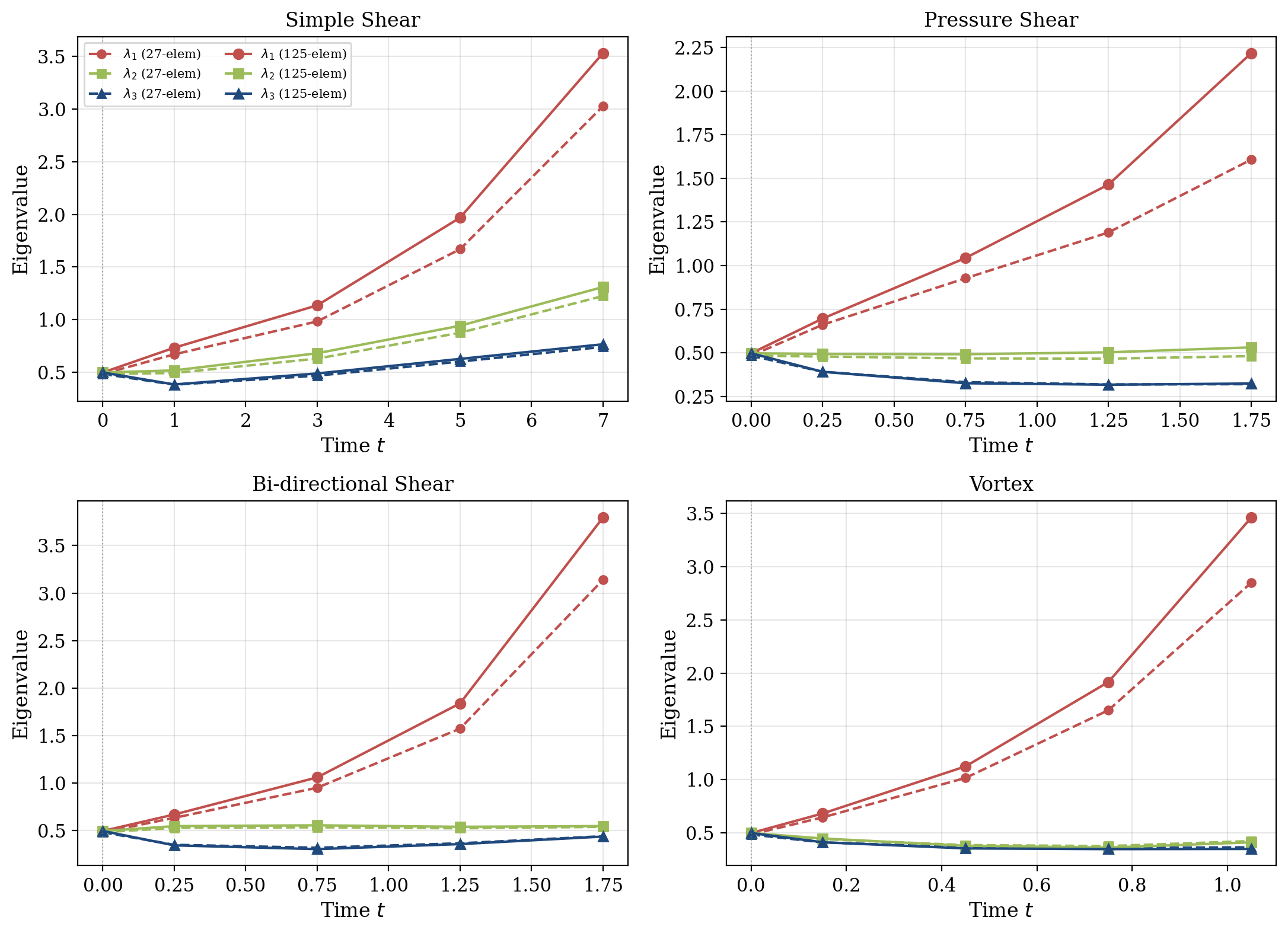}
  \caption{Eigenvalues $\lambda_1\geq\lambda_2\geq\lambda_3$ of the
    fitted covariance matrix $\bsigma(t)$ vs.\ time for all four flows.
    Solid lines: $5^3$ mesh. Dashed lines: $3^3$ mesh.
    The initial isotropic Maxwellian has
    $\lambda_1=\lambda_2=\lambda_3$.
    }
  \label{fig:eigenvalues}
\end{figure}


\Cref{fig:anisotropy} shows the anisotropy ratio
$\lambda_1/\lambda_3$ on a logarithmic scale for all four flows. The anisotropy ratio grows monotonically in all four flows, confirming
that the affine deformation continuously drives the distribution away
from isotropy. On the logarithmic scale the growth appears nearly linear in time for
the vortex and bi-directional shear, suggesting approximately
exponential anisotropy growth driven by the large spin components in
those flows.
Simple shear, by contrast, grows sub-exponentially: the nilpotent
structure of $\bA$ ($\bA^2=\bm{0}$) means $\bF(t)=\bI+t\bA$ grows
only linearly, so $\lambda_1\sim t^2$ at leading order rather than
$\sim e^{2\gamma t}$.
The final ratios, normalized by simulation duration, give anisotropy
growth rates of approximately $9.5$ per unit time for the vortex,
$4.9$ for bi-directional shear, $3.9$ for pressure shear, and $0.66$
for simple shear.
The vortex rate is roughly 19 times that of simple shear, consistent
with the much larger spin components of the vortex
($|W_{12}|=|W_{13}|=0.65$ vs.\ $|W_{12}|=0.40$) and the non-linear
growth enabled by a non-nilpotent $\bA$.

\begin{figure}[htb!]
  \centering
  \includegraphics[width=0.65\textwidth]{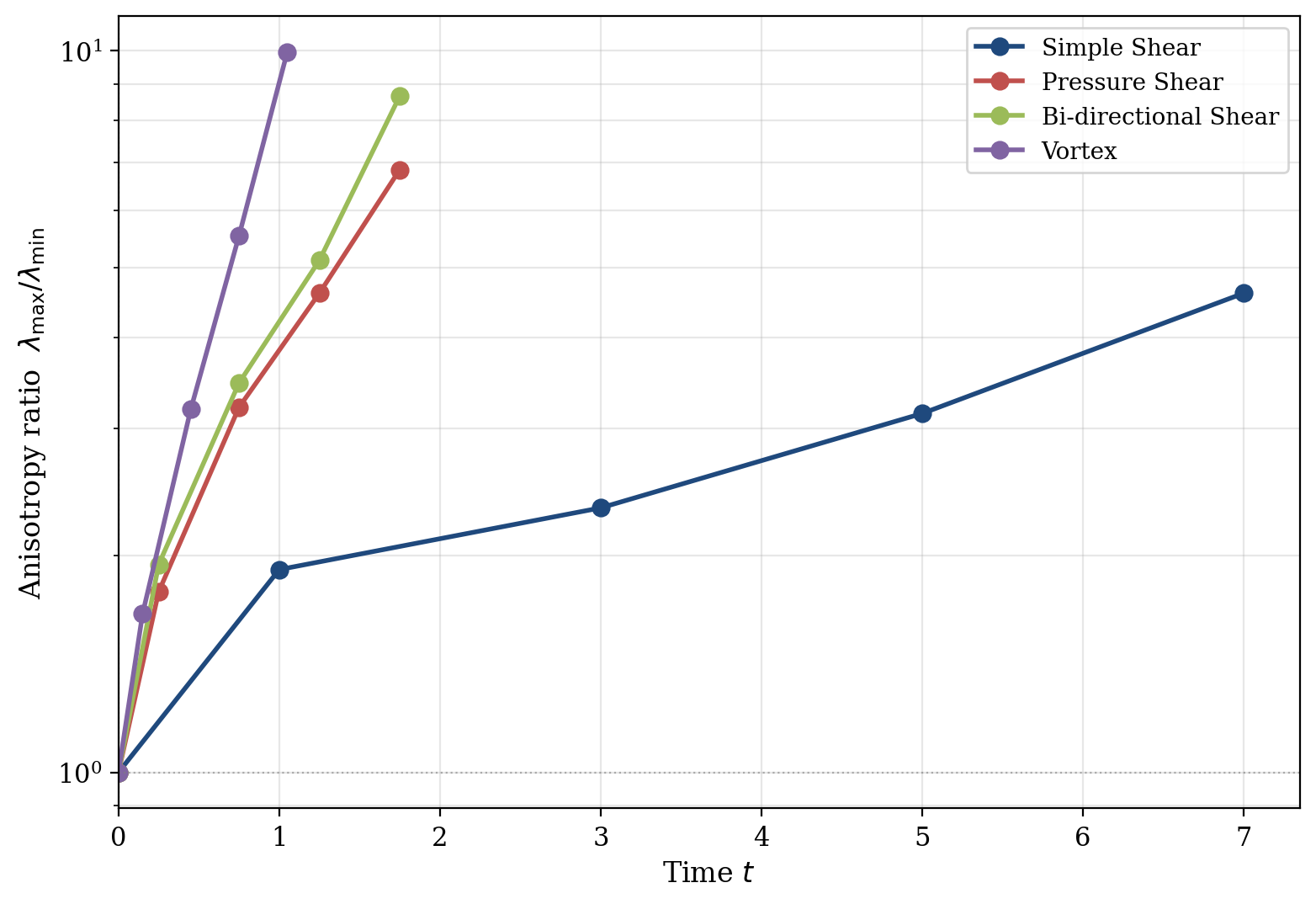}
  \caption{Anisotropy ratio $\lambda_1/\lambda_3$ vs.\ time on a
    logarithmic scale ($5^3$ mesh, all four flows).
    All flows show monotonic growth; the vortex reaches the largest
    ratio ($9.96$ at $t=1.05$) despite the shortest simulation time.}
  \label{fig:anisotropy}
\end{figure}

\subsection{Principal Axis Evolution}
\label{sec:results_angles}

\Cref{fig:angles} shows the evolution of the primary-plane principal axis angle, i.e., the (direction of the largest eigenvector of $\bsigma$  reported as a line direction in $[0^\circ,180^\circ)$), for
all four flows on both meshes.

\begin{figure}[htb!]
  \centering
  \includegraphics[width=\textwidth]{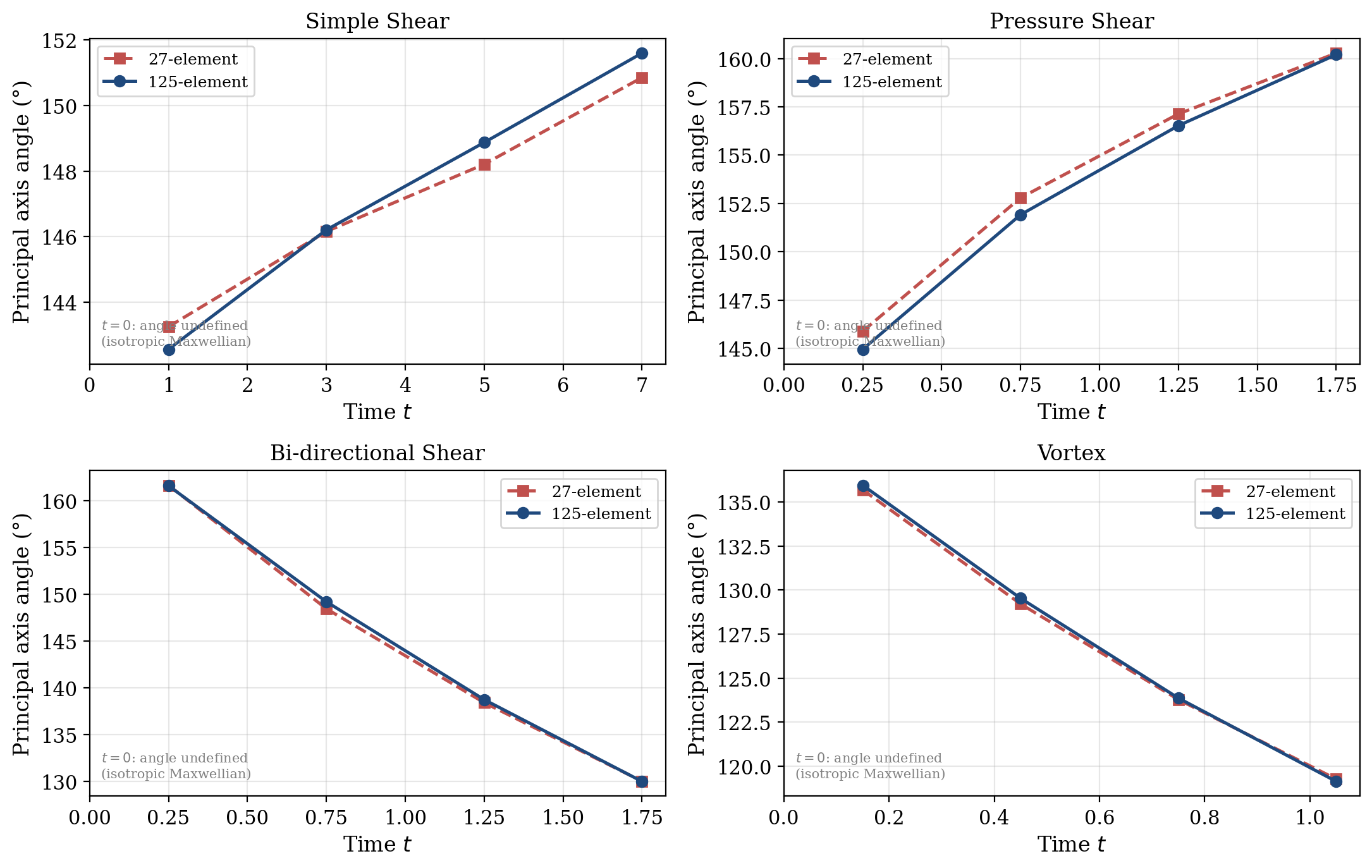}
  \caption{Principal axis angle in the primary velocity-space plane vs.\ time for all four flows.
    Solid lines: $5^3$ mesh. Dashed lines: $3^3$ mesh.
    Primary plane is $w_1$-$w_2$ for simple shear, bi-directional
    shear, and vortex; it is $w_1$-$w_3$ for pressure shear.
    The $3^3$ and $5^3$ meshes agree to within $\approx1^\circ$
    throughout.}
  \label{fig:angles}
\end{figure}

The rotation directions are flow-dependent and reflect the underlying kinematics of each flow.
For simple shear and pressure shear, the principal axis rotates counter-clockwise (angle increasing toward $180^\circ$).
Both flows have a single non-zero off-diagonal entry in $\bA$ in the primary plane ($A_{12}=0.8$ and $A_{13}=1.4$ respectively), so the strain rate $\bD$ has its extensional eigenvector at $135^\circ$ in that plane.
Starting from an isotropic initial condition, the distribution is immediately stretched toward $135^\circ$ and continues rotating counter-clockwise as the anisotropy grows.

For bi-directional shear the rotation is clockwise in the
$w_1$-$w_2$ plane (angle decreasing from $\approx162^\circ$),
despite the spin component $W_{12}=0.45>0$ which alone would drive
counter-clockwise rotation.
The reason is that the $w_1$-$w_3$ shear coupling $A_{13}=1.4$
is stronger than the $w_1$-$w_2$ coupling $A_{12}=0.9$, and the
full three-dimensional principal axis evolution projects as a net
clockwise rotation when viewed in the $w_1$-$w_2$ plane alone.
This illustrates that the rotation direction in any individual
plane cannot be inferred from the spin component in that plane
alone when the flow has genuinely three-dimensional structure.

For the vortex the rotation is also clockwise (angle decreasing
from $\approx134^\circ$), here consistent with the dominant spin
component $W_{12}=-0.65<0$ in the $w_1$-$w_2$ plane.

All four flows show strictly monotonic angle evolution throughout the simulation, confirming a  unidirectional rotation of the principal axis of anisotropy in the primary plane.
The $3^3$ and $5^3$ meshes agree to within $\approx1^\circ$ at every time step, confirming mesh convergence of the principal axis direction.

We next compare the per-plane principal axis angles $\theta_{ij}^{\text{num}}$, against the free-streaming prediction $\theta_{ij}^{\text{fs}}$, obtained from the two covariance matrices $\bsigma_{\text{fit}}(t)$ and $\bsigma_{\text{fs}}(t)$ respectively. These quantities are defined as follows.

The numerical covariance $\bsigma_{\text{fit}}(t)$ is the
covariance of the anisotropic Gaussian fitted to the FEM solution
$\gf_h(t)$ using the method explained in Appendix \ref{app:lm}. 
The free-streaming covariance is obtained in the collisionless limit:
\begin{equation}
  \bsigma_{\text{fs}}(t) \;=\; T_0\,\big(\bF(t)^{\mathsf{T}}\bF(t)\big)^{-1},
  \label{eq:sigma_fs}
\end{equation}
by transporting the isotropic initial covariance
$\bsigma(0)=T_0\bI$ under the affine deformation $\bF(t)=\bI+\bA t$ without considering collisions.

Applying \eqref{eq:principal_angle}, as explained in Appendix \ref{app:angles}, to each covariance gives the
numerical and free-streaming principal angles in the $(i,j)$ plane,
\begin{equation}
  \theta_{ij}^{\text{num}}(t) = \theta_{ij}\!\big(\bsigma_{\text{fit}}(t)\big),
  \qquad
  \theta_{ij}^{\text{fs}}(t)  = \theta_{ij}\!\big(\bsigma_{\text{fs}}(t)\big).
  \label{eq:two_angles}
\end{equation}
Both use only the three entries $\Sigma_{ii},\Sigma_{jj},\Sigma_{ij}$ of the relevant
subblock, so the comparison is planar and self-consistent. The scalar
$T_0$ in \cref{eq:sigma_fs} rescales $\bsigma_{\text{fs}}$ uniformly and
therefore does not affect $\theta_{ij}^{\text{fs}}$. The gap is defined as $\Delta\theta_{ij} = \theta_{ij}^\text{num} - \theta_{ij}^\text{fs}$, and measures the
rotation of the  principal axis induced by collisions,
relative to the free-streaming case.

\paragraph*{Simple shear: $w_1$-$w_2$ plane.}

Both the numerical data and the free-streaming prediction in \Cref{fig:kinematic} show counter-clockwise rotation from $\approx 143^\circ$ to $\approx 152^\circ$.
The free-streaming prediction starts in good agreement ($\Delta\theta=-3.3^\circ$ at $t=1$) and diverges to $-18.6^\circ$ at $t=7$, due to the collision-induced retardation accumulating over multiple collision times.

\paragraph*{Pressure shear: $w_1$-$w_3$ plane.}

Agreement at $t=0.25$ is excellent ($\Delta\theta=-0.2^\circ$) and the gap grows to $-8.6^\circ$ by $t=1.75$ (\Cref{fig:kinematic}).

\paragraph*{Bi-directional shear: all three planes.}

The $w_1$-$w_2$ marginal angle shows non-monotonic behaviour (139.2°$\allowbreak\to$142.5°$\allowbreak\to$137.1°$\allowbreak\to$129.7°), rising slightly at $t=0.75$ before decreasing; the free-streaming prediction captures this correctly, and the gap remains below $2.3^\circ$ throughout (\Cref{fig:kinematic}).
The $w_1$-$w_3$ plane shows a growing gap ($-8.6^\circ$) matching pressure shear at the same $t\cdot|W_{13}|=1.225$.
The $w_2$-$w_3$ plane passes through near-isotropy (axis flip between
$t=0.25$ and $t=0.75$); the numerical results track this and the gap is equal and opposite in sign to that of $w_1$-$w_3$.

\paragraph*{Vortex: all three planes.}

All three planes show sub-$2^\circ$
agreement at $t=0.45$ ($t\cdot|W_{ij}|\leq 0.29$).
By $t=1.05$ the gaps reach $+1.6^\circ$, $+4.5^\circ$, and $-5.1^\circ$.

\begin{figure}[htb!]
  \centering
  \includegraphics[width=\textwidth]{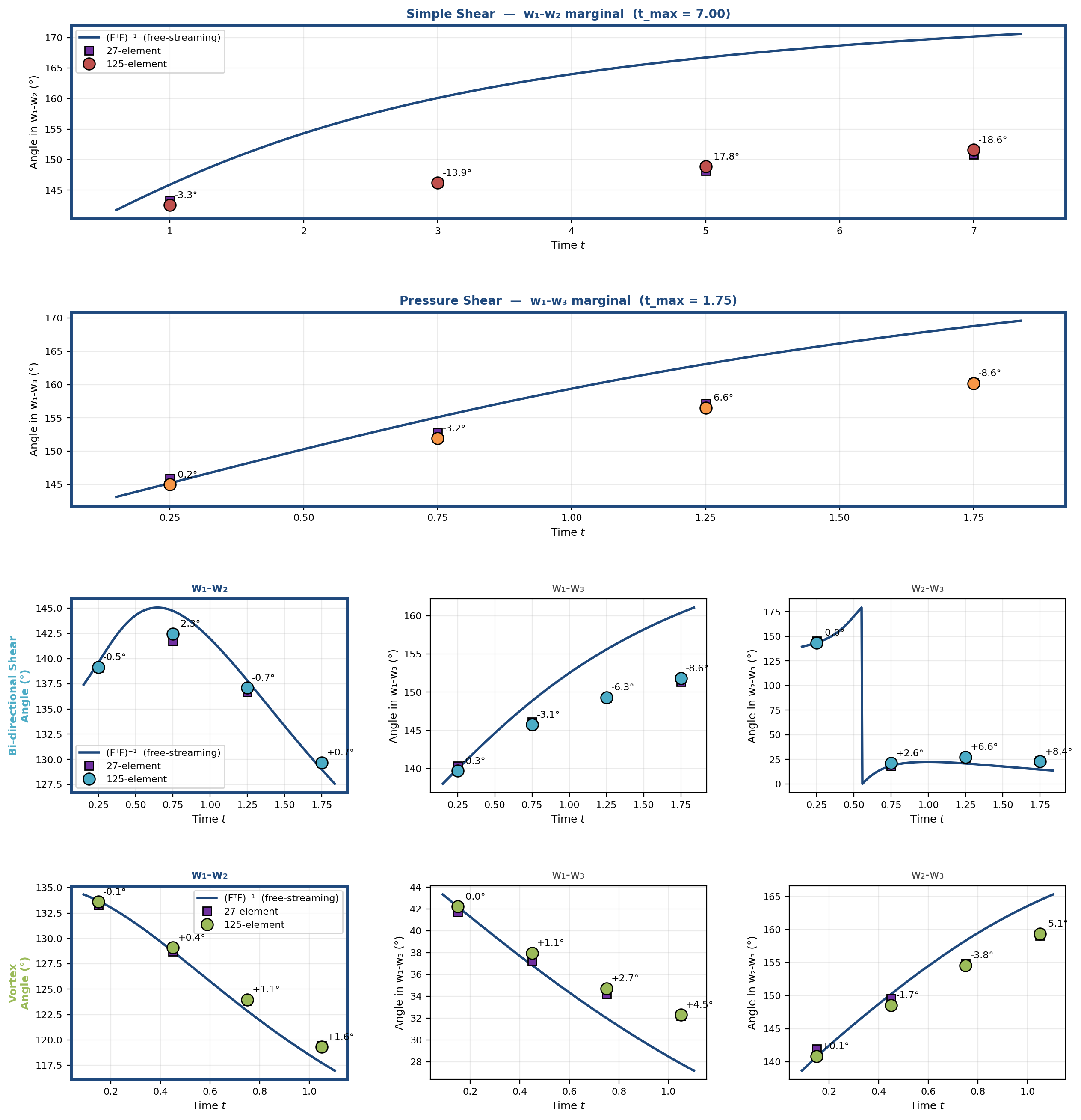}
  \caption{Per-plane principal axis angle vs.\ time for all four
    flows (detailed view complementing \Cref{fig:angles}).
    Solid curves: free-streaming prediction
    $\theta_{ij}^\text{fs}$ from \eqref{eq:two_angles}.
    Circles: $5^3$ data. Squares: $3^3$ data.
    Annotated numbers: gap $\Delta\theta_{ij}$ in degrees.
    Blue border: primary plane. $\Delta t=10^{-3}$.}
  \label{fig:kinematic}
\end{figure}
\section{Discussion and Conclusions}
\label{sec:discussion}

\subsection{Gaussian structure far from equilibrium}
\label{sec:gaussian_discussion}

The central finding of this work is that the velocity distribution
$\gf(\bw,t)$ remains well-approximated by an anisotropic Gaussian
throughout all simulated flows, even at strongly non-equilibrium
conditions.
The linear transport term in~\eqref{eq:reduced_bte} preserves Gaussian
structure exactly in the free-streaming limit: if $\gf_0$ is Gaussian
then $\gf_\text{fs}(\bw,t)=\gf_0(\bF(t)\bw)$ is also Gaussian, since
composition with a linear map preserves the Gaussian form.

The finding that $\gf$ remains well-approximated by an anisotropic Gaussian throughout all simulated flows has important physical implications.
The transport term drives it toward an anisotropic Gaussian in the free-streaming limit.
The collision operator, by the $H$-theorem, drives $\gf$ toward the isotropic Maxwellian.
Their combined nonlinear interaction keeps $\gf$ near an anisotropic Gaussian throughout.
For Maxwell molecules this near-Gaussian behavior can be understood theoretically via Bobylev's Fourier-mode solutions \cite{bobylev1976,bobylev1977}, which show that the Gaussian is exactly invariant under the Boltzmann operator for that kernel.
For hard spheres, used in the present work, there is no analogous result exists, but the near-Gaussian structure observed here suggests this possibility.
The quality of the approximation is quantified by the relative $L_2$ residual $r=\|\gf_h-\gf_\text{fit}\|_{L_2}/\|\gf_h\|_{L_2}$.
All 16 data points across the four flows remain below $10\%$:
Simple Shear $2.3$--$3.6\%$, Bi-directional Shear $2.1$--$5.9\%$,
Vortex $2.4$--$7.1\%$, and Pressure Shear $2.3$--$9.1\%$.
This accuracy is maintained at anisotropy ratios $\lambda_1/\lambda_3$
of up to $9.95$ (Vortex, $t=1.05$) and $8.66$ (Bi-directional Shear,
$t=1.75$).

Because we solve the full Boltzmann collision integral without any moment closure assumption, the Gaussian structure is discovered from the dynamics rather than imposed by modeling; a standard BGK approximation, for instance, drives $\gf$ toward an isotropic Maxwellian by construction, and would suppress or diminish the anisotropic covariance structure observed here.
Standard moment closures such as BGK assume a Maxwellian (single temperature) by construction; they cannot produce the anisotropic covariance structure or principal-axis rotation documented here.
As a secondary implication, the near-Gaussian structure suggests that a tensorial temperature with $\bsigma(t)$ as the state variable may be a useful starting point for developing reduced-order models of non-equilibrium flows. In kinetic theory the only stress is the thermal (kinetic) one,
proportional to the second moment of the reduced peculiar velocity, since momentum is transported only by particle motion and not through inter-particle forces; we therefore
refer to $\bsigma$ as the tensorial temperature, noting that it enters the energy balance~\eqref{eq:dedt} through the stress power because the thermal stress and the temperature tensor are the same object up to constant dimensional factors.

\subsection{Comparison with molecular dynamics}
\label{sec:omd_comparison}

As an additional validation, we compare the DG Boltzmann solution
for simple shear directly against non-equilibrium molecular dynamics
(NEMD) simulations carried out using LAMMPS \cite{Plimpton1995}.
The LAMMPS simulations model a low-density Argon gas in a cubic box
of side $2000\sigma$ (where $\sigma$ is the Lennard-Jones length
parameter for Ar) using $10{,}000$ and $100{,}000$ particles.
We approximate the hard-sphere potential by a truncated Lennard-Jones potential with no attraction.
The same dimensionless shear rate $\dot\gamma = 0.8$ is used in both simulations, with the mean free path expression
used to match the dimensionless parameters between the two methods.

The DG solution and the LAMMPS simulation are qualitatively
consistent: both show diagonal spreading of the distribution function
$\gf$ in the $w_1$-$w_2$ plane, confirming the development of
anisotropy in the direction of the applied shear \cite{debnath2018}.
The initial condition is symmetric with respect to all three axes
in both methods, and the evolved distributions show the same
orientation of the principal axis of anisotropy. We emphasise that this is a deterministic solution of the
Boltzmann equation: $g(\bw,t)$ is represented on a velocity-space
mesh and there are no simulation particles. The number density
$n=\int_{\mathbb{R}^3}g\,d\bw\approx3$ is a nondimensional quantity in
scaled velocity units and \textit{not} the number of particles.

\subsection{Pressure-Shear: dilatation vs.\ shear competition}
\label{sec:dilatative}

The pressure shear flow analysed in \Cref{sec:results} uses
$A_{11}=-0.25$ and $A_{13}=1.4$, in which the gas is compressed along
$w_1$ ($\operatorname{tr}\bA<0$) while being sheared in the $w_1$-$w_3$
plane.
This is the shear-dominated regime: the shear coupling
$A_{13}=1.4$ is roughly six times the magnitude of the compression
rate $A_{11}=-0.25$, and the resulting specific internal energy
$e(t) = \tfrac{1}{2}\operatorname{tr}\bsigma(t)$ grows monotonically
in agreement with the evolution equation
$\deriv{e}{t} = -\bD(t):\bsigma(t)$ derived in \Cref{sec:energy_discussion}.

Pahlani et al.\ \cite{pahlani2023a,pahlani2023b}, working in the
 molecular dynamics framework for the same class of
affine flows, reported a qualitatively different regime in which the
gas is expanded rather than compressed
($\operatorname{tr}\bA>0$, dilatation) while simultaneously sheared.
In this regime they observe a competition between two effects:
dilatation, which acts to lower the specific energy through expansion,
and shear, which raises the specific energy by developing
off-diagonal covariance components.
The competition can produce a non-monotonic energy trajectory with a
transient cooling phase followed by recovery once shear-induced
anisotropy is sufficient to overpower the dilatation cooling.

To investigate whether the present DG Boltzmann solver reproduces the
dilatation-vs-shear competition, we ran an additional simulation with
\begin{equation}
  \bA = \begin{pmatrix}
    0.3 & 0 & 1.2 \\
    0   & 0 & 0   \\
    0   & 0 & 0
  \end{pmatrix},
  \label{eq:dilatation_A}
\end{equation}
i.e.\ the same shear plane as the compressive pressure shear flow but
with the diagonal entry $A_{11}$ flipped in sign and the shear
component slightly reduced ($1.4\to1.2$).
This choice produces $\operatorname{tr}\bA=+0.3$, so the flow is
expansive: the deformation gradient determinant is
$\det\bF(t) = 1+0.3t$, and the analytically expected number density
is $n(t)=n_0/(1+0.3t)$.
For this case the conservation routine of \Cref{sec:numerics} was
modified to target the time-dependent value $n_0/\det\bF(t)$ rather
than the constant $n_0$, as described in \Cref{sec:conservation_fix},
making the constraint consistent with the 
compressibility of the flow.

\Cref{fig:dilatative_n} shows the simulated number density compared
to the analytical prediction $n_0/(1+0.3t)$.
The FEM solution tracks the prediction to within numerical precision
throughout the simulated time range, confirming that the modified
conservation routine correctly enforces the physically expected
compressibility.

\begin{figure}[htb!]
  \centering
  \includegraphics[width=0.45\textwidth]{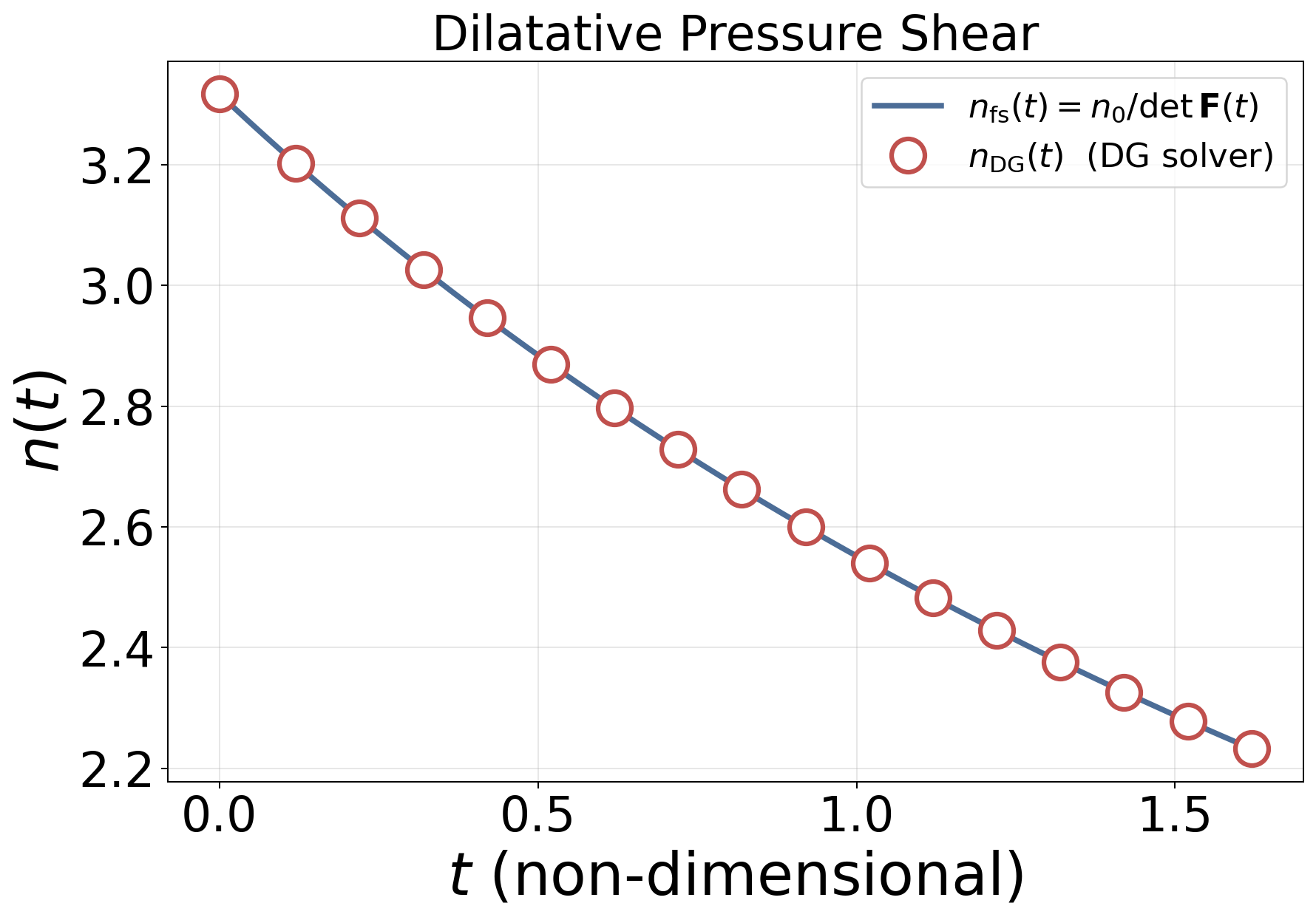}
  \caption{Number density $n(t)$ for the dilatative pressure shear
    flow~\eqref{eq:dilatation_A}.
    The FEM solution (markers) tracks the analytical prediction
    $n_0/(1+0.3t)$ (solid line) throughout, demonstrating that the
    discrete solver correctly captures the compressibility of the
    flow.}
  \label{fig:dilatative_n}
\end{figure}

\Cref{fig:dilatative_e} shows the specific internal energy
$e(t) = \tfrac{1}{2}\operatorname{tr}\bsigma(t)$ computed three ways:
the free-streaming prediction
$e_\text{fs}(t) = \tfrac{T_0}{2}\operatorname{tr}\!\bigl((\bF^\T\bF)^{-1}\bigr)$,
the Gaussian-fitted DG energy
$e_\text{fit}(t) = \tfrac{1}{2}\operatorname{tr}\bsigma_\text{fit}(t)$,
and the direct moment of the DG solution
$e_\text{DG}(t)$.
All three curves exhibit a clear U-shape: the energy decreases
from $e(0)\approx 0.746$ to a minimum near $t^* \approx 0.30$, then
recovers as shear-induced off-diagonal covariance grows.
This is precisely the dilatation-vs-shear competition described by
Pahlani et al.\ \cite{pahlani2023b}.

A quantitative gap develops between $e_\text{fit}$ and $e_\text{DG}$ at later times, reflecting the velocity-domain truncation in the direct moment: the truncated $[-3,3]^3$ domain loses contributions from the tails of the stretched distribution, while the Gaussian fit extrapolates beyond the domain to recover the full second moment.

\begin{figure}[htbp]
  \centering
  \includegraphics[width=0.7\textwidth]{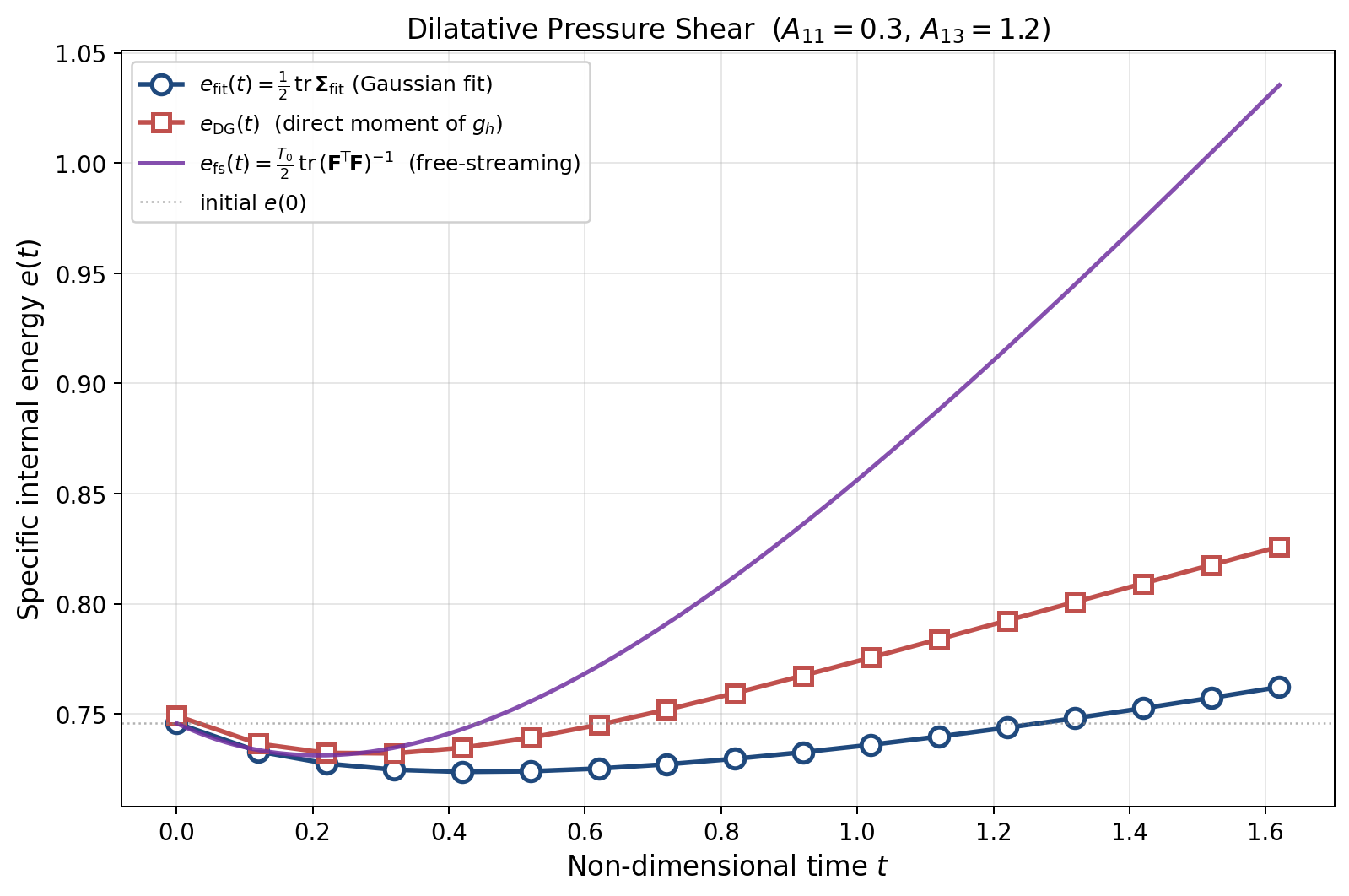}
  \caption{Specific internal energy $e(t)$ for the dilatative pressure
    shear flow~\eqref{eq:dilatation_A}.
    Solid line: free-streaming prediction
    $e_\text{fs}(t)=\tfrac{T_0}{2}\operatorname{tr}\!\bigl((\bF^\T\bF)^{-1}\bigr)$.
    Open circles: Gaussian-fitted DG covariance
    $e_\text{fit}(t)=\tfrac{1}{2}\operatorname{tr}\bsigma_\text{fit}(t)$.
    Open squares: direct DG moment
    $e_\text{DG}(t)=\tfrac{1}{2}\langle|\bw|^2\rangle_h$.
    All three curves show the U-shape characteristic of the
    dilatation-vs-shear competition first observed by Pahlani et al.\
    \cite{pahlani2023b} in MD simulations.}
  \label{fig:dilatative_e}
\end{figure}

The reduced energy evolution equation
$\deriv{e}{t} = -\bD(t):\bsigma(t)$ admits a natural decomposition into two
physically distinct contributions:
\begin{equation}
    \deriv{e}{t}
  = \dot{e}_\text{dil}(t) + \dot{e}_\text{shr}(t)
  = \underbrace{-\bD(t):(\bsigma(t)\circ\bI)}_{\text{diagonal (dilatation/compression)}}
  \;\underbrace{-\bD(t):(\bsigma(t)-\bsigma(t)\circ\bI)}_{\text{off-diagonal (shear)}},
  \label{eq:dedt_decomp}
\end{equation}
where $\circ$ denotes the Hadamard product and $\bsigma\circ\bI$
extracts the diagonal part of $\bsigma$.
The dilatational term $\dot{e}_\text{dil}$ contracts the strain rate
with the diagonal of the covariance and is non-zero from $t=0$ since
$\bsigma(0)=T_0\bI$; the shear term $\dot{e}_\text{shr}$ contracts
$\bD$ with the off-diagonal components of $\bsigma$ and is zero
initially.

At $t=0$, the diagonal contribution is
$-T_0\operatorname{tr}\bD(0) = -0.3\,T_0 \approx -0.149$, so
$\left.\deriv{e}{t}\right|_0 < 0$ and dilatation cooling dominates.
As $\Sigma_{13}$ develops a negative value under the positive shear
$A_{13}>0$, the off-diagonal contribution $-2D_{13}\Sigma_{13}$ grows
positive (since $D_{13}>0$ and $\Sigma_{13}<0$), overcoming
the dilatation cooling.
The crossover at $t^* \approx 0.30$ marks the point at which the two
contributions balance; beyond this time the shear heating dominates
and the energy increases.
The dilatation-vs-shear competition in pressure shear was previously known only from particle-based OMD simulations.
The present DG solution reproduces this physics deterministically and provides a quantitative decomposition into the two competing contributions via~\eqref{eq:dedt_decomp}.
The agreement of the U-shape between $e_\text{fs}$ and $e_\text{fit}$ further shows that collisions reduce the late-time recovery rate but do not modify the existence of the energy minimum, demonstrating that the free-streaming mechanism remains the dominant feature of the energy evolution even under collisional dynamics.

\subsection{Transient versus long-time behaviour}
The anisotropic Gaussian behaviour characterized here describes the
transient regime. 
For hard potentials, rigorous results establish that in the long-time, collision-dominated regime the distribution approaches an isotropic Maxwellian with growing temperature \cite{james2019b,kepka2024,duanliu2025}.
Our results are consistent with this picture: the isotropizing action of collisions is visible within the simulated window as the lag of the fitted principal-axis angle behind the free-streaming  prediction (\Cref{fig:angles}) and as the departure of the specific energy from the free-streaming value (\Cref{fig:dilatative_e}), both of which vanish identically in the collisionless limit, where the exact covariance is $T_0 (\bF^\T\bF)^{-1}$. 
Reaching the collision-dominated regime numerically, with its unbounded temperature growth, is challenging since the fixed velocity domain bounds the reliable simulation window; a quantitative comparison of the collision and transport terms, and of the covariance growth rates against the power laws of
\cite{james2019b,kepka2024,duanliu2025}, are important future directions.

We note finally that the unbounded temperature growth driving the asymptotic isotropization is specific to the idealized homoenergetic setting: in physical realizations, heat conduction (or a thermostat in molecular simulation) stabilizes the temperature, and steadily sheared states remain anisotropic
\cite{garzo2003}. The transient anisotropic regime characterized here is therefore relevant for numerous physically realistic situations wherein the temperature reaches a steady state.

\section*{}

\paragraph*{Software and Data Availability.}

The DG solver code, the fitted-Gaussian results for all four flows, and the post-processing scripts are available at \url{https://github.com/debnath2109/DG-Boltzmann-for-affine-flows}.
The raw simulation input files (\texttt{G\_INIT.dat} as initial condition), precomputed Collision data, and the output files (\texttt{gout.dat}) are archived at \href{https://doi.org/10.5281/zenodo.20518886}
{\texttt{doi.org/10.5281/zenodo.20518886}}.

\paragraph*{Acknowledgments.}

We acknowledge Richard D. James for insightful comments; AFOSR (MURI FA9550-18-1-0095) and ARO (MURI W911NF-24-2-0184) for financial support; Air Force Research Laboratory for hosting Kaushik Dayal's research visits; and NSF ACCESS for computing resources provided by Pittsburgh Supercomputing Center.


\appendix

\makeatletter
\renewcommand*{\thesection}{\Alph{section}}
\renewcommand*{\thesubsection}{\thesection.\arabic{subsection}}
\renewcommand*{\p@subsection}{}
\renewcommand*{\thesubsubsection}{\thesubsection.\arabic{subsubsection}}
\renewcommand*{\p@subsubsection}{}
\makeatother

\section{Reconstruction Algorithm}
\label{app:reconstruction}

Given the DG nodal values $\{g_{\bm\alpha}^{(e)}\}$, the reconstruction
of $\gf_h$ on a $200^3$ uniform grid proceeds as follows.

\begin{algorithm}[H]
\caption{DG reconstruction on $200^3$ grid}
\label{alg:reconstruction}
\begin{algorithmic}[1]
\State Set $c_\text{max}=3$, $N_\text{grid}=200$
\State Allocate $G[200,200,200]$ initialised to zero
\For{each element $e = 1,\ldots,N_e^3$}
  \State Determine the range of grid indices $[i_1,i_2]\times[j_1,j_2]\times[k_1,k_2]$
         covered by element $e$
  \For{each grid point $(w_1^i,w_2^j,w_3^k)$ in the element range}
    \State Map to reference coordinates:
           $\xi_m = 2(w_m - w_m^{(e)})/h$, $m=1,2,3$
    \State Evaluate $G[i,j,k] = \sum_{\bm\alpha}
           g_{\bm\alpha}^{(e)}\,\ell_{\alpha_1}(\xi_1)
           \ell_{\alpha_2}(\xi_2)\ell_{\alpha_3}(\xi_3)$
  \EndFor
\EndFor
\State \textbf{return} $G$
\end{algorithmic}
\end{algorithm}

The reconstruction is exact (no quadrature error) at the $200^3$
grid points, since the DG polynomial is evaluated directly via the
Lagrange formula.

\section{Best-fit of the Reduced Velocity Distribution}
\label{app:lm}

At selected time steps, we fit the reduced velocity distribution $\gf_h$ to the anisotropic Gaussian form:
\begin{equation}
  \gf_\text{fit}(\bw) = A_0
    \exp\!\bigl(-\tfrac{1}{2}\,\bw^\T\bsigma^{-1}\bw\bigr),
  \label{eq:Gfit}
\end{equation}
with seven free parameters $\theta=(\Sigma_{11},\Sigma_{12},\Sigma_{13},\Sigma_{22},\Sigma_{23},\Sigma_{33},A_0)$: the 6 independent components of the symmetric positive-definite covariance matrix $\bsigma$ and the amplitude $A_0$.
The Levenberg--Marquardt algorithm \cite{levenberg1944,marquardt1963} is used to solve the nonlinear least-squares problem:
\begin{equation}
  \min_\theta \; R(\theta) =
  \bigl\|\gf_h - \gf_\text{fit}(\,\cdot\,;\theta)\bigr\|^2_{\ell^2(\Omega_h)},
  \label{eq:resnorm_def}
\end{equation}
where $\Omega_h$ is the $200^3$ grid.

To ensure positive definiteness of $\bsigma(\theta)$ throughout the iteration, the covariance is parameterized via its Cholesky factorization $\bsigma = \bL_\Sigma\bL_\Sigma^\T$, with the six lower-triangular entries of $\bL_\Sigma$ as the free parameters replacing the six components of $\bsigma$.

The initial guess is $\bsigma=\bI$ (isotropic), $A_0=1/(2\pi)^{3/2}$, and the solution from the previous time step is used as a warm start for subsequent snapshots.
Convergence is declared when the relative change in $R$ between Levenberg--Marquardt iterations falls below $10^{-8}$.

\section{In-plane Principal Angle.}
\label{app:angles}

Let $\bsigma$ be a symmetric positive-definite $3\times3$ covariance
matrix and let $(i,j)$ be a coordinate plane, $(i,j)\in
\{(1,2),(1,3),(2,3)\}$. The marginal distribution of $(w_i,w_j)$,
obtained by integrating the Gaussian over the third velocity
component, is a planar Gaussian whose covariance is the $2\times2$
subblock
\begin{equation}
  \bsigma^{(ij)} =
  \begin{pmatrix} \Sigma_{ii} & \Sigma_{ij}\\[2pt] \Sigma_{ij} & \Sigma_{jj} \end{pmatrix}.
  \label{eq:subblock}
\end{equation}
Its iso-probability contours are ellipses, and the major axis of the
ellipse is the in-plane direction of largest variance, i.e.\ the
principal direction of the symmetric tensor $\bsigma^{(ij)}$. Let
$\bm{e}_1(\bsigma^{(ij)})=(v_i,v_j)$ be the eigenvector associated with
the larger eigenvalue of $\bsigma^{(ij)}$, with components $v_i$ and $v_j$
along the $w_i$- and $w_j$-axes. Measured from the $w_i$-axis
toward the $w_j$-axis, and regarded as an undirected line, this
direction makes the angle
\begin{equation}
  \theta_{ij}(\bsigma) =
  \tan^{-1}(v_j / v_i), \quad \theta_{ij} \in [0^\circ,180^\circ),
  \label{eq:principal_angle}
\end{equation}


\bibliographystyle{alpha}
\bibliography{boltzmann}

\end{document}